\newcommand{\diag}[1]{\ensuremath{\mathrm{diag}(#1)}}
\newtheoremstyle{bfnote}%
{}{}%
{\itshape}{}%
{\bfseries}{.}%
{ }%
{\thmname{#1}\thmnumber{ #2}\thmnote{ (#3)}}
\theoremstyle{bfnote}
\newtheorem{thm}{Theorem}
\newtheorem{lem}{Lemma}
\DeclarePairedDelimiterX\Set[2]{\lbrace}{\rbrace}%
{ #1 \,\delimsize| \,\mathopen{} #2 }
\newcommand\highlight[1]{\textcolor{black}{#1}}
\newcommand\highlights[1]{\textcolor{black}{#1}}
\newcommand{\real}[0]{\mathbb R}
\begin{document}
\bibliographystyle{IEEEtran}
\title{Reinforcement Learning for Optimal Primary Frequency Control: A Lyapunov Approach}

\author{Wenqi Cui, Yan Jiang, and Baosen Zhang
\thanks{The authors are with the Department
of Electrical and Computer Engineering, University of Washington, Seattle, WA, 98195 e-mail:\{wenqicui,jiangyan, zhangbao\}@uw.edu}
\thanks{The authors are supported in part by the National Science Foundation grant ECCS-1930605, ECCS-1942326 and the Washington Clean Energy Institute.}}


\maketitle

\begin{abstract}
As more inverter-connected renewable resources are integrated into the grid, frequency stability may degrade because of the reduction in mechanical inertia and damping. A common approach to mitigate this degradation in performance is to use the power electronic interfaces of the renewable resources for primary frequency control. Since inverter-connected resources can realize almost arbitrary responses to frequency changes, they are not limited to reproducing the linear droop behaviors. To fully leverage their capabilities, reinforcement learning (RL) has emerged as a popular method to design nonlinear controllers to optimize a host of objective functions. 

Because both inverter-connected resources and synchronous generators would be significant part of the grid in the near and intermediate future, the learned controller of the former should be stabilizing with respect to the nonlinear dynamics of the latter. To overcome this challenge, we explicitly engineer the structure of neural network--based controllers such that they guarantee system stability by construction, through the use of a Lyapunov function. A recurrent neural network architecture is used to efficiently train the controllers. The resulting controllers only use local information and outperform optimal linear droop as well as other state-of-the-art learning approaches.


\end{abstract}


%
\IEEEpeerreviewmaketitle

\section{Introduction}
Due to the shift from conventional generation to renewable resources such as wind, solar, and storage, there has been noticeable degradation of system frequency dynamics~\cite{kroposki2017achieving}. In the near and intermediate future, both inverter-connected resources and synchronous generators would play significant roles in the grid. Therefore, the inverters still need to ``play nice'' with synchronous generators, where they need to respect the dynamics of the generators and help maintain the stability of the grid. A degradation in the frequency dynamics would increase the risk of load shedding and blackouts, which in turn limits the amount of renewable energy that can be integrated.


A widely adopted approach to use inverter-connected resources to provide primary frequency regulation is to engineer them to respond as conventional synchronous generators through frequency droop controls. 
Because of the mechanical characteristic of conventional generators, droop controls are typically linear functions of frequency deviations (with possible deadbands and saturation)~\cite{kundur1994power}. 
Inverter-connected resources can mimic this behavior by changing their active power setpoints subject to  frequency deviations~\cite{poolla2017optimal,zhang2020modeling}. 
However, as for the common performance metrics adopted in practice, including frequency deviations and control costs~\cite{zhao2014design,mallada2017optimal}, linear controllers are not optimal~\cite{Tinu20}. Since inverters are solid state electronic devices, they can implement almost arbitrary control laws by quickly adjusting their power setpoints, subject to some actuation limits~\cite{johnson2013synchronization,MPCFastFrequency}. 
Then a natural question arises: \emph{are there other control laws that still guarantee the stability of a system with synchronous generators, but have more optimal performance compared  to  linear  droop  response?}

It turns out that designing optimal controllers that respect the dynamics of power systems is not trivial. Power system dynamics are governed by nonlinear swing equations and thus even optimizing linear controllers is a difficult problem. For nonlinear controllers, they need to be parameterized in a tractable fashion for optimization. 
More importantly, the controllers need to stabilize the frequency dynamics of the grid, which introduces nonlinear constraints that are not easy to work with algebraically.

A standard approach to overcome some of the above difficulties is to work with the  linearized small signal model, where controllers can be designed to guarantee asymptotic stability~\cite{zhao2014design,mallada2017optimal}. 
However, stability becomes more crucial when state deviations are large, where the nonlinear dynamics have to be considered. When nonlinear dynamics are considered, most approaches are restricted to tuning the slopes of the linear droop controllers~\cite{zhang2020modeling}. To obtain better performances, model predictive control has also been used~\cite{MPCFastFrequency,Tinu20}, but they require robust real-time communication and computation capabilities, which is not yet available for much of the current system.


To break the unenviable position of not fully utilizing the capabilities of inverters for frequency control, a number reinforcement learning (RL) approaches have been proposed~\cite{chen2021reinforcement, yan2018data, qu2020scalable}. 
Specifically, (deep) neural networks are often used to
parameterize the controllers and RL is used to train them. A number of algorithms, including deterministic policy gradient algorithm, multi-Q-learning and actor-critic methods, have been used in frequency regulation and other control problems.

The key challenge in using RL is to guarantee that learned controllers are stabilizing, that is, frequencies in the system would reach a stable equilibrium after disturbances in the system.  
To this end, existing approaches typically use soft penalties by adding a high cost when states leave prescribed ranges~\cite{huang2019adaptive,chen2021reinforcement}.
However, these approaches are ad hoc. Stability should be treated as a hard constraint rather than through penalties, which is especially important since training can only be done on a limited number of samples while the controller should be stabilizing over a set of points in the state space. Another challenge comes from the controller training process. Generated trajectories are normally used to train the neural network controllers, but the evolution of state variables over long time horizons makes direct back-propagation inefficient. Approaches that use approximate value (or Q) function assume that the states are in a stationary probability distribution~\cite{sutton2018reinforcement}, which is generally not true during transients. Lyapunov functions have been used as constraints~\cite{ZHANG2019274}, but learning was not considered and controllers was manually tuned. 

This paper proposes a recurrent neural network (RNN)-based RL framework to solve optimal primary frequency control problem with a stability guarantee. We derive a simple algebraic condition on the nonlinear controllers that guarantee local exponential stability of the system. More precisely, using Lyapunov theory, we show that the function from the frequency deviation to the active power output implemented by a controller needs to be monotonic and through the origin at each bus. The controllers are decentralized (each only using the frequency deviation at its own bus) and the stability guarantee holds for most system parameters and topologies. 


The monotonicity of the controller is realized through a stacked-ReLU neural network which can be designed explicitly. In order to train the controllers efficiently, we design a RNN framework where the time-coupled variables in the power system form the cell component of the RNN. Simulation results show that the proposed method can learn a static nonlinear controller that performs better than traditional linear droop control. Furthermore, we show that RL without considering stability can lead to unstable controllers, whereas our approach always maintains stability. Code and data are available at https://github.com/Wenqi-Cui/RNN-RL-Frequency-Lyapunov.


In summary, the main contributions of the paper are:
\begin{enumerate}
    \item A Lyapunov function is integrated in the structural properties of controllers, guaranteeing local asymptotic stability over a large set of states. Namely, the controllers need to be monotonic functions crossing the origin.
    \item The controller is parameterized with a stacked-ReLU neural network and a RNN-based RL framework is proposed to efficiently train the controllers.
\end{enumerate}

The remaining of this paper is organized as follows.  Section~\ref{sec:model} introduces the system model and the optimal control problem. Section~\ref{section:Structure Properties} provides the main theorems governing the structure of a stabilizing controller and illustrates how it can be achieved via neural networks. Section~\ref{section:RNN} shows how they can be trained efficiently. Section~\ref{sec:simulation} shows the simulation results. Section~\ref{sec:conclusion} concludes the paper.

\section{Model and Problem Formulation} \label{sec:model}
\subsection{Power System Model}
Consider a $n$-bus power system that can be modelled as a connected graph $\left(\mathcal{V},\mathcal{E} \right)$. Specifically, buses are indexed by $i,j \in \mathcal{V} := \left[n\right]:=\{1,\dots, n\} $ and transmission lines are denoted by unordered pairs $\{i,j\} \in \mathcal{E} \subset \Set*{\{i,j\}}{i,j\in\mathcal{V},i\not=j}$. Let states variables be phase angle $\boldsymbol{\theta}:=\left(\theta_i, i \in \left[n\right] \right) \in \real^n$ and frequency deviation from the nominal frequency $\boldsymbol{\omega}:=\left(\omega_i, i \in \left[n\right] \right) \in \real^n$.\footnote{Throughout this paper, vectors are denoted in lower case bold and matrices are denoted in upper case bold, while scalars are unbolded.}

In this paper, we consider static local feedback controllers: bus $i$ measures its local frequency deviation $\omega_i$ and applies a time-invariant function to determine the control action $u_i$. Thus, the controller on the bus $i$ is written as $u_i(\omega_i)$. The control action changes the \emph{active power}  coming from inverter-connected resources (e.g., solar PV, electric vehicles, and storage). The detailed hardware implementation of the control laws is not the focus of this paper and is similar to existing works
(e.g., see~\cite{dominguez2012models,Xu18,HG19,jiang2021storage} and the references within).

We assume the bus voltage magnitudes are 1 per unit and the reactive power flows and injections are ignored. This is the commonly used lossless power flow model, which is suitable to primary frequency control of transmission systems with small resistances and well-regulated voltages~\cite{sauer2017power}.
Then, the  frequency dynamics  is given by the swing equation~\cite{kundur1994power}:
\begin{subequations}\label{eq:Dynamic}
\begin{align}
\dot{\theta}_i &=\omega_i\,, \label{subeq:Dynamic_delta}\\
 M_{i}\dot{\omega}_{i} &=p_{\mathrm{m},i}- D_i \omega_i-u_i({\omega}_i)- \sum_{j=1}^{n}B_{i j}\sin(\theta_i-\theta_j) \label{subeq:Dynamic_w}
\end{align}
\end{subequations}
where $\boldsymbol{M} := \diag{M_i, i \in \left[n\right]} \in \real^{n \times n}$ are the generator inertia constants, $\boldsymbol{D} := \diag{D_i, i \in \left[n\right]} \in \real^{n \times n}$ are the combined frequency response coefficients from synchronous generators and frequency sensitive load, $\boldsymbol{p}_\mathrm{m}:=\left(p_{\mathrm{m},i}, i \in \left[n\right] \right) \in \real^n$ are the net power injections, $\boldsymbol{B}:=\left[B_{ij}\right]\in \real^{n \times n}$ is the susceptance matrix 
with $B_{ij}=0, \forall  \{i,j\} \notin \mathcal{E}$, and $\boldsymbol{u}(\boldsymbol{\omega}):=\left(u_i({\omega}_i), i \in \left[n\right] \right) \in \real^n$. 

The swing equations in~\eqref{eq:Dynamic} are called the classic second-order model. As in most existing literature~\cite{dominguez2012models,Xu18,HG19,jiang2021storage}, we use them in analysis. In simulations (Section~\ref{sec:simulation}), we use $6^{nd}$-order generator model  with turbine-governing
system as well as PLL loops on the inverters for frequency measurement~\cite{chow1992toolbox,kundur1994power}.


\subsection{Optimization Problem Formulation }
As mentioned above, we would like to design control functions $u_i(\omega_i)$'s that can improve frequency deviation with a moderate control cost. Therefore, we consider two costs in the objective function of the optimal primary frequency control problem: the cost on frequency deviations
and the cost of controllers
~\cite{tabas2019optimal, mallada2017optimal,jiang2020dynamic,Tinu20}. 
For a time horizon of length $T$, a reasonable cost on frequency deviation is represented by the infinity norm of $\omega_i(t)$ over the time horizon from 0 to $T$, i.e., $\|\omega_i\|_{\infty}:=\sup_{0\leq t\leq T} |\omega_i(t)|$, which quantifies the maximum frequency deviation during the time horizon. For the cost on the control action, we use a quadratic cost defined by its two-norm $\|u_i\|_2^2:=\frac{1}{T} \int_{0}^{T} (u_i(t))^2\mathrm{d}t$. 
The optimization problem is: 
\begin{subequations}\label{eq:Continuous_Optimization}
\begin{align}
\min_{\boldsymbol{u}} & \sum_{i=1}^{n}\color{black}\left( \|\omega_i\|_{\infty}+\gamma\|u_i\|_2^2\right)\label{subeq:Continuous_Optimization_obj}\\
\mbox{s.t. } & \dot{\theta}_i =\omega_i \label{subeq:Continuous_Optimization_delta}\\
 &M_{i}\dot{\omega}_{i} =p_{\mathrm{m},i}\!-\!D_i\omega_i\!-\!u_i(\omega_i)\!-\!\!\sum_{j=1}^{n}B_{i j} \sin(\theta_i\!-\!\theta_j) \label{subeq:Continuous_Optimization_w}\\
& \underline{u}_i \leq u_i(\omega_i)\leq \overline{u}_i\label{subeq:Continuous_Optimization_bound}\\
& u_i(\omega_i) \text{ is stabilizing.}\label{subeq:Continuous_Optimization_stability}
\end{align}
\end{subequations}
Here, $\gamma$ in \eqref{subeq:Continuous_Optimization_obj} is a coefficient that trades off the cost of action with respect to frequency deviation. In a more general problem setting, distinct weights $\gamma_i$'s can be assigned to individual control actions to achieve a desirable frequency performance at an acceptable level of control action~\cite{HG19}. In practice, the power inputs from inverter-based resources are always bounded by saturation. Hence, the lower and upper bounds for the control action at bus $i$ \highlight{are included as $\underline{u}_i$ and $\overline{u}_i$, respectively, in \eqref{subeq:Continuous_Optimization_bound}}. The special case where $\underline{u}_i=\overline{u}_i=0$ can be used to characterize a bus $i$ with no controllable resources. Last but not least, we include the requirement that $u_i(\omega_i)$'s stabilize the system \eqref{eq:Dynamic} as a hard constraint in \eqref{subeq:Continuous_Optimization_stability}.  


\begin{figure}[ht]	
	\centering
	\includegraphics[scale=0.58]{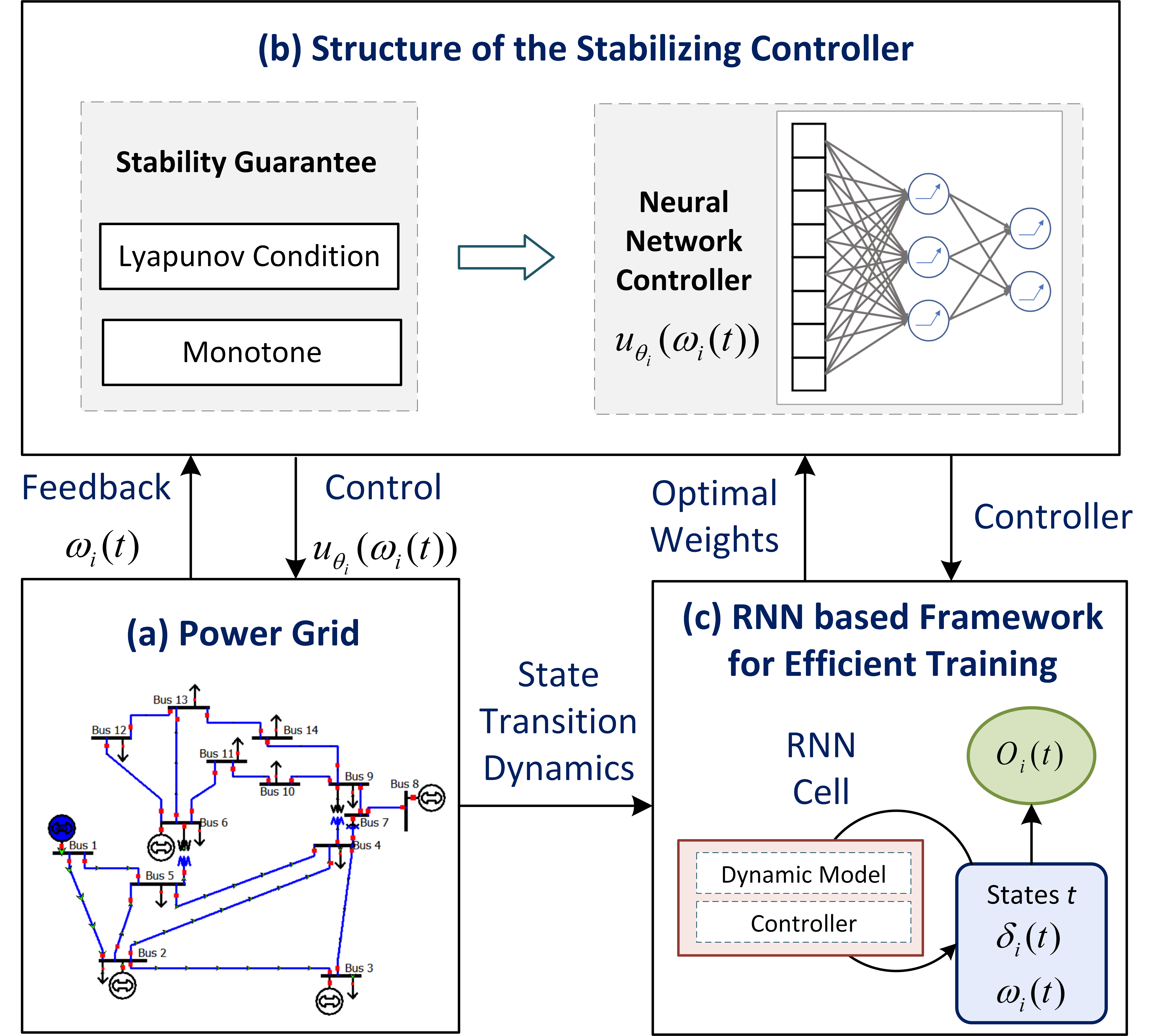}
	\caption{Reinforcement learning for the  frequency control problem}
	\label{fig:Total_Structure}
\end{figure} 

\subsection{Reinforcement Learning for Optimal Frequency Control}
In \eqref{eq:Continuous_Optimization}, we are optimizing the function $\boldsymbol{u}(\cdot)$, which is an infinite dimensional problem. To parameterize and find a good controller, reinforcement learning (RL) has emerged as an attractive alternative, where controllers are parameterized by neural networks.
Thus, we parameterize each of the controllers $u_i(\omega_i)$ as a neural network with weight $\varphi_i$, sometimes written as $u_{\varphi_i}(\omega_i)$. Then, RL trains neural networks by updating  $\varphi_i$'s to minimize the loss given by the objective function in~\eqref{subeq:Continuous_Optimization_obj}.  


The major challenge for RL comes from the hard constraint on the stability of the system. Although we can add a high penalty to the large magnitude of $\omega_i$, such a penalty does not guarantee that the stability constraints are always satisfied. In fact, learned controllers that lead to reasonably looking trajectories in training may destabilize the system during testing. To overcome this challenge, we directly use the physical model \eqref{eq:Dynamic} to derive the structure of the stabilizing controller based on Lyapunov stability theory. As illustrated in Fig.~\ref{fig:Total_Structure}(b) and discussed in Section~\ref{section:Structure Properties}, stability can be guaranteed by enforcing a structure on the controllers $u_{\varphi_i}(\omega_i)$'s. 

To use RL, we need to discretize the system dynamics in \eqref{eq:Dynamic}. The weights $\varphi_i$'s impact system behaviors across all of the time steps, which makes direct back propagation inefficient.  
Thus, we use the state transition dynamics to create a RNN framework to increase training efficiency, as illustrated in Fig.~\ref{fig:Total_Structure}(c). Details are elaborated in Section~\ref{section:RNN}.

\section{Structural Properties of the Controller} \label{section:Structure Properties}

To constrain the search space in \eqref{eq:Continuous_Optimization} to the set of \emph{stabilizing controllers}, we derive structural properties that the controllers should satisfy from Lyapunov stability theory. More precisely, by finding an appropriate Lyapunov function, we show that, if the output of each controller is monotonically increasing with respect to the frequency deviation, then the system has a unique equilibrium that is locally exponentially stable. In addition, we directly engineer this monotonicity feature into neural networks via properly designed weights and biases. These weights and biases are then trained to optimize the objective function in~\eqref{subeq:Continuous_Optimization_obj}.

\subsection{Uniqueness of the Equilibrium}
Since the frequency dynamics of the system in~\eqref{subeq:Dynamic_w} depends only on the phase angle differences, to characterize the equilibrium of the dynamics~\eqref{eq:Dynamic}, we make the following change of coordinates: 
$$\delta_i:=\theta_i-\dfrac{1}{n}\sum_{j=1}^{n} \theta_j\,,$$
where $\boldsymbol{\delta}:=\left(\delta_i, i \in \left[n\right] \right) \in \real^n$ can be understood as the center-of-inertia coordinates~\cite{sauer2017power,weitenberg2018robust}. Then, the system dynamics in~\eqref{eq:Dynamic} can be written as
\begin{subequations}\label{eq:equlibrium_change_cor}
    \begin{align}
\dot{\delta_i} &=\omega_i-\frac{1}{n}\sum_{j=1}^n\omega_j\,, \\
 M_{i}\dot{\omega}_{i} &=p_{\mathrm{m},i}- D_i \omega_i-u_i({\omega}_i)-\sum_{j=1}^{n}B_{i j}\sin(\delta_i-\delta_j)\,.
    \end{align}
\end{subequations}


Under an arbitrary control law $u_i(\omega_i)$, there may not exist a well-defined equilibrium point which the system will settle into. In the next lemma, we show that an unique equilibrium exists if the controllers satisfy a certain structure property.

\begin{lem}[Unique equilibrium] Suppose the function $u_i(\omega_i)$ is a monotonically increasing function of the local frequency deviation $\omega_i$. Suppose the angles satisfy $ |\delta_i-\delta_j|\in [0,\pi/2)$ for all $i$ connected to $j$. Then there exists an unique equilibrium point $(\boldsymbol{\delta}^\ast, \bm{1}\omega^\ast)$ described by
\begin{subequations}\label{eq:equli-sync}
    \begin{align}
 &0 =p_{\mathrm{m},i}- D_i \omega^\ast-u_i(\omega^\ast)-\sum_{j=1}^{n}B_{i j}\sin(\delta_i^\ast-\delta_j^\ast)\,,
 \label{eq:equli-sync-delta}\\
 &\sum_{i=1}^n p_{\mathrm{m},i}=\sum_{i=1}^nu_i(\omega^\ast)+\omega^\ast\sum_{i=1}^n D_i\,, \label{eq:equli-sync-omega}    \end{align}
\end{subequations}
if the power flow equations~\eqref{eq:equli-sync-delta} are feasible,
where $\bm{1}$ is a vector of all $1$'s with an appropriate dimension.
\end{lem}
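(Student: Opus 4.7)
The plan is to decouple the proof into the frequency part and the angle part, handling each separately via a simple monotonicity argument.

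First, I would impose the equilibrium condition $\dot{\boldsymbol\delta}=\bm 0$ on the first block of \eqref{eq:equlibrium_change_cor}. Since $\dot\delta_i=\omega_i-\tfrac{1}{n}\sum_j\omega_j$, setting this to zero for every $i$ forces $\omega_i=\omega_j$ for all $i,j$, i.e.\ every equilibrium must be synchronous with a common value $\omega^\ast$. Substituting $\boldsymbol\omega=\bm 1\omega^\ast$ into $\dot{\boldsymbol\omega}=\bm 0$ immediately yields \eqref{eq:equli-sync-delta}. Summing \eqref{eq:equli-sync-delta} over $i\in[n]$ and using the symmetry $B_{ij}=B_{ji}$ so that $\sum_{i,j}B_{ij}\sin(\delta_i^\ast-\delta_j^\ast)=0$ then produces \eqref{eq:equli-sync-omega}. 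Thus the pair of conditions in \eqref{eq:equli-sync} is necessary.

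Next I would prove uniqueness of the synchronous frequency $\omega^\ast$ directly from \eqref{eq:equli-sync-omega}. Define
\begin{equation*}
f(\omega):=\omega\sum_{i=1}^n D_i+\sum_{i=1}^n u_i(\omega)-\sum_{i=1}^n p_{\mathrm{m},i}.
\end{equation*}
Since each $u_i$ is monotonically increasing by hypothesis and each $D_i>0$, the function $f$ is strictly monotonically increasing in $\omega$, so the scalar equation $f(\omega^\ast)=0$ admits at most one root. Feasibility of the power flow equations guarantees that a root exists, establishing existence and uniqueness of $\omega^\ast$.

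Finally, with $\omega^\ast$ fixed, I would argue uniqueness of $\boldsymbol\delta^\ast$ within the region $\{\boldsymbol\delta:|\delta_i-\delta_j|<\pi/2\text{ for }\{i,j\}\in\mathcal E\}$. Writing $P_i^\ast:=p_{\mathrm{m},i}-D_i\omega^\ast-u_i(\omega^\ast)$, the remaining equations in \eqref{eq:equli-sync-delta} are the standard lossless power flow equations $P_i^\ast=\sum_j B_{ij}\sin(\delta_i^\ast-\delta_j^\ast)$. These are the first-order conditions of the potential
\begin{equation*}
V(\boldsymbol\delta):=-\sum_{\{i,j\}\in\mathcal E}B_{ij}\cos(\delta_i-\delta_j)-\sum_{i=1}^n P_i^\ast\delta_i,
\end{equation*}
whose Hessian is a weighted Laplacian with weights $B_{ij}\cos(\delta_i-\delta_j)>0$ on the prescribed region; this Hessian is positive definite on the subspace orthogonal to $\bm 1$, which is precisely the subspace singled out by the center-of-inertia coordinates $\sum_i\delta_i=0$. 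Strict convexity of $V$ on that subspace yields uniqueness of $\boldsymbol\delta^\ast$.

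The main obstacle I anticipate is the last step: making the uniqueness of $\boldsymbol\delta^\ast$ rigorous in the quotiented coordinates and invoking the right convexity/monotonicity fact about the lossless power flow map on $|\delta_i-\delta_j|<\pi/2$. The frequency part is essentially a one-line monotonicity argument, but the angle part rests on the standard but nontrivial fact that the weighted-Laplacian Hessian is positive definite modulo the all-ones direction, which must be stated carefully in the center-of-inertia frame.
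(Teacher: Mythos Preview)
Your proposal is correct and mirrors the paper's argument: derive synchrony of $\omega_i^\ast$ from $\dot\delta_i=0$, sum the nodal balance to obtain the scalar equation for $\omega^\ast$, and invoke monotonicity of $\omega\mapsto\omega\sum_iD_i+\sum_iu_i(\omega)$ for its uniqueness (the paper phrases this same step as a contradiction). For the angle part the paper simply cites \cite[Lemma~1]{weitenberg2018exponential}, whose content is precisely the weighted-Laplacian/strict-convexity argument you sketch, so your ``main obstacle'' is already the standard result being invoked.
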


\begin{proof} 
First of all, in steady state, \eqref{eq:equlibrium_change_cor} yields
\begin{subequations}\label{eq:equli}
    \begin{align}
 0 &=\omega_i^\ast-\frac{1}{n}\sum_{j=1}^n\omega_i^\ast\,, \label{eq:omega-i-eq}\\
 0 &=p_{\mathrm{m},i}- D_i \omega_i^\ast-u_i(\omega_i^\ast)-\sum_{j=1}^{n}B_{i j}\sin(\delta_i^\ast-\delta_j^\ast)\,.\label{eq:delta-i-eq}
    \end{align}
\end{subequations}
Clearly, \eqref{eq:omega-i-eq} implies that the frequency deviation at each bus synchronizes to the same solution that $\omega_i^\ast = \omega^\ast$, and we have the desired equations in \eqref{eq:equli-sync-delta}. Since the system is lossless and $B_{ij}=B_{ji}$, the net power flow, $\sum_{i=1}^{n}\sum_{j=1}^{n}B_{i j}\sin(\delta_i^\ast-\delta_j^\ast)$, is zero. Using this fact and by summing~\eqref{eq:equli-sync-delta}, we get \eqref{eq:equli-sync-omega}.  

Next, we show the uniqueness of $\omega^\ast$ by contradiction. Suppose that both $\omega^\ast$ and $\omega^\star$ satisfy \eqref{eq:equli-sync-omega}, where $\omega^\ast\neq\omega^\star$. Then, $$\sum_{i=1}^n u_i(\omega^\ast)+\omega^\ast\sum_{i=1}^n D_i=\sum_{i=1}^n u_i(\omega^\star)+\omega^\star\sum_{i=1}^n D_i\,,$$
which yields
\begin{equation}\label{eq:uni-omega-pro}
\sum_{i=1}^n \dfrac{u_i(\omega^\ast)-u_i(\omega^\star)}{\omega^\ast-\omega^\star}=-\sum_{i=1}^n D_i<0\,.    
\end{equation}
However, if $u_i(\omega_i)$ is monotonically increasing, the left hand side of the equality in \eqref{eq:uni-omega-pro} must be nonnegative, which is a contradiction. The uniqueness of $\boldsymbol{\delta}^\ast$ follows from the same argument as in~\cite[Lemma~1]{weitenberg2018exponential}.
\end{proof}

Note that the angles  $\boldsymbol{\delta}$ are constrained to be in the region denoted by $\Theta :=\left \{\boldsymbol{\delta}| |\delta_i-\delta_j|\in [0,\pi/2), \forall \{i,j\}\in \mathcal{E} \right \}$, which is sufficiently large to include almost all practical scenarios and is a common assumption in literature~\cite{weitenberg2018robust, sauer2017power}.

\subsection{Lyapunov Stability Analysis}
In this subsection, we further show that the equilibrium point $(\boldsymbol{\delta}^\ast, \boldsymbol{\omega}^\ast)$ described by \eqref{eq:equli-sync} is locally exponentially stable if the controllers are monotone. The next
theorem is the main result of the paper.
\begin{thm}[Local exponential stability]\label{theorem:Exponential_stable}
If the control output $u_i(\omega_i)$ is a monotonically increasing function of the local frequency deviation $\omega_i$, then the equilibrium point $(\boldsymbol{\delta}^\ast, \bm{1}\omega^\ast)$ described by \eqref{eq:equli-sync} is locally exponentially stable. In particular, the region of attraction include the set $\mathcal{D}:=\! \Set*{(\boldsymbol{\delta}, \boldsymbol{\omega})\in \real^n \times \real^n}{|\delta_i-\delta_j|\in [0,\pi/2) \mbox{ for $i,j$ connected}}$.
\end{thm}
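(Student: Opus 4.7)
My plan is to construct an energy-like Lyapunov function, use monotonicity of the controllers to show that it decreases along trajectories inside $\mathcal{D}$, apply LaSalle's invariance principle together with Lemma~1 to conclude asymptotic convergence, and finally invoke Lyapunov's indirect method to upgrade the rate to exponential near the equilibrium. The natural candidate is the kinetic energy of the shifted frequencies plus a Bregman-type shifted cosine potential:
\begin{equation*}
V(\boldsymbol{\delta},\boldsymbol{\omega})=\tfrac{1}{2}\sum_{i=1}^n M_i(\omega_i-\omega^\ast)^2+\!\!\sum_{\{i,j\}\in\mathcal{E}}\!\! B_{ij}\bigl[\cos(\delta_i^\ast-\delta_j^\ast)-\cos(\delta_i-\delta_j)\bigr]-\sum_{i=1}^n\Bigl(\sum_{j=1}^n B_{ij}\sin(\delta_i^\ast-\delta_j^\ast)\Bigr)(\delta_i-\delta_i^\ast).
\end{equation*}
The kinetic part is a manifest sum of squares. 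The potential part has Hessian equal to the weighted graph Laplacian with edge weights $B_{ij}\cos(\delta_i-\delta_j)>0$ on $\mathcal{D}$, whose only null direction is $\bm{1}$; since the center-of-inertia coordinate already enforces $\bm{1}^\top\boldsymbol{\delta}=0$, this makes $V$ positive definite on $\mathcal{D}$ with a strict minimum at $(\boldsymbol{\delta}^\ast,\bm{1}\omega^\ast)$.

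The core calculation is $\dot V$ along~\eqref{eq:equlibrium_change_cor}. Using the equilibrium identity~\eqref{eq:equli-sync-delta} to eliminate $p_{\mathrm{m},i}$, and noting that the common-mode term $\tfrac{1}{n}\sum_j\omega_j$ in $\dot{\boldsymbol{\delta}}$ gets multiplied by $\sum_{i,j}B_{ij}[\sin(\delta_i-\delta_j)-\sin(\delta_i^\ast-\delta_j^\ast)]$, which is zero by the skew-symmetry of its summand and the symmetry $B_{ij}=B_{ji}$, the angle--frequency cross terms should cancel and leave
\begin{equation*}
\dot V=-\sum_{i=1}^n D_i(\omega_i-\omega^\ast)^2-\sum_{i=1}^n\bigl(u_i(\omega_i)-u_i(\omega^\ast)\bigr)(\omega_i-\omega^\ast)\leq 0,
\end{equation*}
where the first sum is nonpositive because $D_i>0$ and the second is nonpositive precisely by the hypothesis that $u_i$ is monotonically increasing. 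Applying LaSalle on any compact sublevel set of $V$ contained in $\mathcal{D}$ confines invariant trajectories to $\boldsymbol{\omega}\equiv\bm{1}\omega^\ast$; invariance then forces $\dot{\boldsymbol{\omega}}=\mathbf{0}$, which reproduces the power-flow equation of~\eqref{eq:equli-sync-delta}, and Lemma~1 pins $\boldsymbol{\delta}=\boldsymbol{\delta}^\ast$, giving asymptotic convergence from $\mathcal{D}$.

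To upgrade to the \emph{exponential} rate asserted in the theorem, I would linearize~\eqref{eq:equlibrium_change_cor} at $(\boldsymbol{\delta}^\ast,\bm{1}\omega^\ast)$: the frequency dynamics contribute the diagonal damping $-(D_i+u_i'(\omega^\ast))/M_i$ (strictly negative since $D_i>0$ and $u_i'(\omega^\ast)\geq 0$ by monotonicity), while the angle coupling is the same positive-weighted Laplacian that made $W$ convex. The resulting linearization is a standard dissipative port-Hamiltonian system, and an easy LaSalle argument at the linear level (or direct computation of the characteristic polynomial, again using $D_i+u_i'(\omega^\ast)>0$) shows it is Hurwitz on the quotient by the common-mode direction. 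Hartman--Grobman then converts this to local exponential stability, whose basin of attraction is extended to all of $\mathcal{D}$ by the Lyapunov argument above.

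The step I anticipate as the main obstacle is the bookkeeping in the $\dot V$ computation, in particular the clean cancellation that hinges on the common-mode term dropping out via skew-symmetry, together with verifying that the sublevel sets of $V$ needed for LaSalle are genuinely contained in $\mathcal{D}$ rather than spilling over to where the cosine Hessian ceases to be positive at $|\delta_i-\delta_j|=\pi/2$; this is where the angle constraint defining $\Theta$ becomes essential.
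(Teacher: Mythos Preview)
Your approach differs substantively from the paper's. The paper does \emph{not} use LaSalle plus linearization; instead it augments exactly your energy function with a small cross term,
\[
V=\tfrac{1}{2}(\boldsymbol{\omega}-\boldsymbol{\omega}^\ast)^T\boldsymbol{M}(\boldsymbol{\omega}-\boldsymbol{\omega}^\ast)+W_\mathrm{p}(\boldsymbol{\delta})+\epsilon\,(\boldsymbol{p}_\mathrm{e}(\boldsymbol{\delta})-\boldsymbol{p}_\mathrm{e}(\boldsymbol{\delta}^\ast))^T\boldsymbol{M}(\boldsymbol{\omega}-\boldsymbol{\omega}^\ast),
\]
and shows directly that $\dot V\le -cV$ on $\mathcal{D}$ for some $c>0$ and sufficiently small $\epsilon$. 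The cross term is precisely what turns your negative \emph{semi}-definite $\dot V$ (which vanishes on the whole hyperplane $\boldsymbol{\omega}=\bm{1}\omega^\ast$) into a strictly negative quadratic form in both $\boldsymbol{p}_\mathrm{e}(\boldsymbol{\delta})-\boldsymbol{p}_\mathrm{e}(\boldsymbol{\delta}^\ast)$ and $\boldsymbol{\omega}-\boldsymbol{\omega}^\ast$. Monotonicity of $u_i$ enters only through the nonnegativity of the difference quotient $k_i(\omega_i)=\bigl(u_i(\omega_i)-u_i(\omega^\ast)\bigr)/(\omega_i-\omega^\ast)$, so no derivative of $u_i$ is ever taken. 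This buys a uniform exponential rate over all of $\mathcal{D}$, not just in a linearization neighborhood.

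Your route has one genuine gap: the step ``linearize and apply Hartman--Grobman'' presupposes that each $u_i$ is $C^1$ at $\omega^\ast$ so that $u_i'(\omega^\ast)$ exists and the Jacobian is well defined. The theorem assumes only that $u_i$ is monotonically increasing, and the controllers actually constructed later in the paper are piecewise-linear stacked-ReLU functions, which are \emph{not} differentiable at the breakpoints; there is no reason $\omega^\ast$ avoids them. Without $C^1$ regularity, Lyapunov's indirect method and Hartman--Grobman are unavailable, and your argument stalls at asymptotic stability. The paper's cross-term construction is exactly what sidesteps this: it never linearizes, and the exponential rate comes from the quadratic sandwich $\alpha_1\|\cdot\|^2\le V\le\alpha_2\|\cdot\|^2$ together with $\dot V\le -cV$. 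If you want to salvage your plan under the stated hypotheses, you would need to replace the linearization step with something like this strict-Lyapunov argument anyway.
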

The qualifier ``local'' in Theorem~\ref{theorem:Exponential_stable} is necessary since we need to assume that the trajectories start within the region of attraction. We note that this is far less restrictive than standard local convergence results in nonlinear systems, where the region of attraction is confined to be close to the equilibrium point~\cite{sastry2013nonlinear}. The region of attraction in Theorem~\ref{theorem:Exponential_stable} is quite large and include most operating points of interest. 

Theorem~\ref{theorem:Exponential_stable} gives structural properties\footnote{These are sometimes called extended class $\kappa$ functions} for controllers that guarantee exponential stability that does not depend on system parameter and topologies.  
Therefore, the optimal performance comes from training on a particular system, but the stability guarantees do not. This robustness to uncertainties is a key advantage of constraining the structure of networks compared to purely model-free RL approaches. The design of neural networks is given in the next section (Section~\ref{sec:design}) and the rest of this section outlines the proof of Theorem~\ref{theorem:Exponential_stable}. 

From Lyapunov stability theory, if there exists a Lyapunov function $V(\boldsymbol{\delta}, \boldsymbol{\omega})$ such that $\dot{V}(\boldsymbol{\delta}, \boldsymbol{\omega})\leq -cV(\boldsymbol{\delta}, \boldsymbol{\omega})$ for a constant $c>0$, then the system is exponentially stable~\cite{sastry2013nonlinear}. Therefore, we prove Theorem~\ref{theorem:Exponential_stable} by constructing a qualified Lyapunov function  and showing that such a constant $c$ exist.
Inspired by~\cite{weitenberg2018robust}, we consider the following Lyapunov function candidate:
\begin{align}\label{eq: Lyapunov}
V(\boldsymbol{\delta}, \boldsymbol{\omega})=&\ \frac{1}{2}\sum_{i=1}^{n} M_i(\omega_{i}-\omega^\ast)^{2}+W_\mathrm{p}(\boldsymbol{\delta})+\epsilon W_\mathrm{c}(\boldsymbol{\delta}, \boldsymbol{\omega})
\end{align}
with
\begin{subequations}
\begin{align}
    W_\mathrm{p}(\boldsymbol{\delta}) := 
&-\frac{1}{2}\sum_{i=1}^{n} \sum_{j=1}^{n} B_{ij}\left( \cos(\delta_{ij}) -\cos(\delta_{ij}^*)\right)\nonumber\\
&-\sum_{i=1}^{n}\sum_{j=1}^{n} B_{ij} \sin(\delta_{ij}^*)( \delta_{i}-\delta_{i}^*)\,,\label{eq:Wp}\\
W_\mathrm{c}(\boldsymbol{\delta}, \boldsymbol{\omega}):=&\sum_{i=1}^{n} \sum_{j=1}^{n}B_{ij}\left(  \sin(\delta_{ij})-\sin(\delta_{ij}^*)\right)M_i(\omega_i-\omega^\ast)\,,\nonumber
\end{align}
\end{subequations}
where $\delta_{ij}:=\delta_{i}-\delta_{j}$ and $\epsilon>0$ is a tunable parameter that should be set small enough. The physical intuition for the  Lyapunov function can be found in~\cite{arapostathis1982global, weitenberg2018robust}. 
Strictly speaking, this function is not a ``true'' Lyapunov function since it is not bounded below. 
The following lemma proves that $V(\boldsymbol{\delta}, \boldsymbol{\omega})$ is a well-defined Lyapunov function on the domain $\mathcal{D}$, which suffices to show that trajectories starting in $\mathcal{D}$ converge to the equilibrium. Then Lemma~\ref{lemma: dot_V} derives the time derivative  $\dot{V}(\boldsymbol{\delta}, \boldsymbol{\omega})$ and Lemma~\ref{lemma:Exponential_stable} shows there exists a constant $c>0$ such that $\dot{V}(\boldsymbol{\delta}, \boldsymbol{\omega})\leq -cV(\boldsymbol{\delta}, \boldsymbol{\omega})$.


\begin{lem}[Bounds on Lyapunov function]\label{lemma:Pos_Lyapunov}
$\forall (\boldsymbol{\delta}, \boldsymbol{\omega})\!\in\! \mathcal{D}$, the Lyapunov function $V(\boldsymbol{\delta}, \boldsymbol{\omega})$ in~\eqref{eq: Lyapunov} satisfies
$$
\begin{aligned}
&V(\boldsymbol{\delta}, \boldsymbol{\omega}) \geq\alpha_{1}(\|\bm{\delta}-\bm{\delta}^*\|_2^{2}+\|\bm{\omega}-\bm{\omega}^*\|_2^{2})\,,\\ 
& V(\boldsymbol{\delta}, \boldsymbol{\omega}) \leq \alpha_{2}(\|\bm{\delta}-\bm{\delta}^*\|_2^{2}+\|\bm{\omega}-\bm{\omega}^*\|_2^{2})\,,
\end{aligned}
$$
for some constants $\alpha_{1}>0$ and $\alpha_{2}>0$.
\end{lem}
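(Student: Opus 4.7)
The plan is to bound each of the three pieces of $V(\boldsymbol{\delta},\boldsymbol{\omega})$ separately by a quadratic form in $(\boldsymbol{\delta}-\boldsymbol{\delta}^*,\boldsymbol{\omega}-\boldsymbol{\omega}^*)$, and then to choose $\epsilon$ small enough so that the sign-indefinite cross term $\epsilon W_\mathrm{c}$ cannot swallow the strictly positive diagonal pieces. The kinetic piece $\tfrac12\sum_i M_i(\omega_i-\omega^*)^2$ is immediately sandwiched between $\tfrac{M_{\min}}{2}\|\boldsymbol{\omega}-\boldsymbol{\omega}^*\|_2^2$ and $\tfrac{M_{\max}}{2}\|\boldsymbol{\omega}-\boldsymbol{\omega}^*\|_2^2$, so this term is routine.

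For $W_\mathrm{p}(\boldsymbol{\delta})$, I would first symmetrize the linear-in-$\delta$ piece using $B_{ij}=B_{ji}$ and $\sin\delta_{ij}^*=-\sin\delta_{ji}^*$ to rewrite it as $\tfrac12\sum_{i,j} B_{ij}\sin(\delta_{ij}^*)(\delta_{ij}-\delta_{ij}^*)$. Applying Taylor's theorem with second-order remainder to $\cos$ around $\delta_{ij}^*$ then collapses $W_\mathrm{p}$ to
\begin{equation*}
W_\mathrm{p}(\boldsymbol{\delta}) \;=\; \tfrac{1}{4}\sum_{i=1}^{n}\sum_{j=1}^{n} B_{ij}\cos(\xi_{ij})\,(\delta_{ij}-\delta_{ij}^*)^2,
\end{equation*}
with $\xi_{ij}$ between $\delta_{ij}$ and $\delta_{ij}^*$. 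Writing $\sum_{i,j} B_{ij}(\delta_{ij}-\delta_{ij}^*)^2 = 2(\boldsymbol{\delta}-\boldsymbol{\delta}^*)^{\top} L\,(\boldsymbol{\delta}-\boldsymbol{\delta}^*)$ for the graph Laplacian $L$, the upper bound follows at once from $\cos\xi_{ij}\le 1$ and the spectral norm of $L$. The lower bound is more delicate: both $\boldsymbol{\delta}$ and $\boldsymbol{\delta}^*$ obey the center-of-inertia constraint $\boldsymbol{1}^{\top}\boldsymbol{\delta}=0$, so the error vector lies in $\boldsymbol{1}^{\perp}$, on which the Laplacian of the connected graph is strictly positive definite with Fiedler value $\lambda_2>0$. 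Restricting to a compact sublevel set of $V$ inside $\mathcal{D}$ forces $|\xi_{ij}|\le \bar\delta<\pi/2$ uniformly, whence $\cos\xi_{ij}\ge\cos\bar\delta>0$ and $W_\mathrm{p}\ge c_\mathrm{p}\|\boldsymbol{\delta}-\boldsymbol{\delta}^*\|_2^2$ for some $c_\mathrm{p}>0$.

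The cross term is controlled using the Lipschitz estimate $|\sin\delta_{ij}-\sin\delta_{ij}^*|\le|\delta_{ij}-\delta_{ij}^*|$ combined with Young's inequality, yielding $|W_\mathrm{c}|\le C_\delta\|\boldsymbol{\delta}-\boldsymbol{\delta}^*\|_2^2 + C_\omega\|\boldsymbol{\omega}-\boldsymbol{\omega}^*\|_2^2$ for constants depending on the susceptances and $M_{\max}$. Assembling the three estimates gives
\begin{equation*}
\bigl(\tfrac{M_{\min}}{2}-\epsilon C_\omega\bigr)\|\boldsymbol{\omega}-\boldsymbol{\omega}^*\|_2^2 + (c_\mathrm{p}-\epsilon C_\delta)\|\boldsymbol{\delta}-\boldsymbol{\delta}^*\|_2^2 \;\le\; V \;\le\; \bigl(\tfrac{M_{\max}}{2}+\epsilon C_\omega\bigr)\|\boldsymbol{\omega}-\boldsymbol{\omega}^*\|_2^2 + (C_\mathrm{p}+\epsilon C_\delta)\|\boldsymbol{\delta}-\boldsymbol{\delta}^*\|_2^2,
\end{equation*}
and choosing $\epsilon<\min\{M_{\min}/(2C_\omega),\,c_\mathrm{p}/C_\delta\}$ produces the desired $\alpha_1,\alpha_2>0$.

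The main obstacle will be the lower bound on $W_\mathrm{p}$: specifically, (i) showing that the susceptance-weighted Laplacian is positive definite on the center-of-inertia subspace $\boldsymbol{1}^{\perp}$, and (ii) securing a uniform positive floor for the remainder weight $\cos\xi_{ij}$, for which the standard device is to work on compact sublevel sets of $V$ so that $|\delta_{ij}|\le\bar\delta<\pi/2$ holds uniformly despite $\mathcal{D}$ being open.
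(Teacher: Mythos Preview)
Your proof is correct and follows the same three-part decomposition as the paper: bound the kinetic, potential, and cross terms separately, then take $\epsilon$ small so the sign-indefinite cross term is absorbed by the strictly positive diagonal pieces. The only difference is that the paper dispatches the quadratic bounds on $W_\mathrm{p}$ (and the Lipschitz estimate on $\boldsymbol{p}_\mathrm{e}$ appearing in the cross term) by direct citation of a lemma from Weitenberg et al., whereas you unpack that step explicitly via the second-order Taylor remainder and the Fiedler eigenvalue of the susceptance-weighted Laplacian on $\boldsymbol{1}^{\perp}$; your version is therefore more self-contained but structurally identical.
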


The proof is given in Appendix~\ref{appendices:lemma_Pos_Lyapunov}. It follows directly from Lemma~\ref{lemma:Pos_Lyapunov} that $V(\boldsymbol{\delta}^\ast, \boldsymbol{\omega}^\ast)=0$ and $V(\boldsymbol{\delta}, \boldsymbol{\omega})>0, \forall(\boldsymbol{\delta}, \boldsymbol{\omega})\in \mathcal{D}\setminus{(\boldsymbol{\delta}^\ast, \boldsymbol{\omega}^\ast)}$. To show $V(\boldsymbol{\delta}, \boldsymbol{\omega})$ is a Lyapunov function on $\mathcal{D}$, we need to show $\dot{V}(\boldsymbol{\delta}, \boldsymbol{\omega})$ decreases in $\mathcal{D}$.

\begin{lem}[Time derivative]\label{lemma: dot_V}
The time derivative of $V(\boldsymbol{\delta}, \boldsymbol{\omega})$ defined in~\eqref{eq: Lyapunov} is given by 
\begin{align}\label{eq:Vdot}
\dot{V}(\boldsymbol{\delta}, \boldsymbol{\omega}) =&-\!\begin{bmatrix}\boldsymbol{p}_\mathrm{e}(\boldsymbol{\delta})\!-\!\boldsymbol{p}_\mathrm{e}(\boldsymbol{\delta}^\ast)\\\boldsymbol{\omega}\!-\!\boldsymbol{\omega}^\ast\end{bmatrix}^T\!\!\boldsymbol{Q}(\boldsymbol{\delta})\begin{bmatrix}\boldsymbol{p}_\mathrm{e}(\boldsymbol{\delta})\!-\!\boldsymbol{p}_\mathrm{e}(\boldsymbol{\delta}^\ast)\\\boldsymbol{\omega}\!-\!\boldsymbol{\omega}^\ast\end{bmatrix}\\
&-\left[\boldsymbol{\omega}\!-\!\boldsymbol{\omega}^\ast \!\!+\! \epsilon\left(\boldsymbol{p}_\mathrm{e}(\boldsymbol{\delta})\!-\!\boldsymbol{p}_\mathrm{e}(\boldsymbol{\delta}^\ast)\right)\right]^T\left(\boldsymbol{u}(\boldsymbol{\omega})-\boldsymbol{u}(\boldsymbol{\omega}^\ast)\right)\nonumber
\end{align}
with
\begin{equation}
\boldsymbol{Q}(\boldsymbol{\delta}):=
    \begin{bmatrix}
\epsilon\boldsymbol{I}  & \dfrac{\epsilon}{2}\boldsymbol{D} \\  \dfrac{\epsilon}{2}\boldsymbol{D}  & \boldsymbol{D}-\dfrac{\epsilon}{2}(\boldsymbol{H}(\boldsymbol{\delta})\boldsymbol{M}+\boldsymbol{M}\boldsymbol{H}(\boldsymbol{\delta}))
\end{bmatrix},
\end{equation}
which is positive definite for $\epsilon$  small enough,
 $\boldsymbol{p}_\mathrm{e}(\boldsymbol{\delta}) := \left(p_{\mathrm{e},i}(\boldsymbol{\delta}):=\sum_{j=1}^{n} B_{ij} \sin(\delta_{ij}), i \in \left[n\right] \right) \in \real^n$ and $\boldsymbol{H}(\boldsymbol{\delta})=\nabla \boldsymbol{p}_\mathrm{e}(\boldsymbol{\delta}):=\left[ H_{ij}\right]\in \real^{n \times n}$ such that 
\begin{align}\label{eq:Lij}
    H_{ij}:=
    \begin{dcases}
    -B_{ij}\cos(\delta_{ij}) & \text{if}\ i\neq j\\
    \sum_{j^\prime=1,j^\prime\neq i}^n B_{ij\prime}\cos(\delta_{ij\prime})& \text{if}\ i=j
    \end{dcases}\,,\qquad\forall i,j \in \left[n\right]\,.
\end{align}
\end{lem}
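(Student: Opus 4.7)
The plan is to differentiate $V$ along trajectories term by term, substitute the dynamics \eqref{eq:equlibrium_change_cor} together with the equilibrium identities from \eqref{eq:equli-sync}, and then collect the result into the quadratic form with matrix $\boldsymbol{Q}(\boldsymbol{\delta})$ plus the monotonicity term. Three identities will do most of the work and should be established first: (i) $\nabla_{\boldsymbol{\delta}} W_{\mathrm{p}} = \boldsymbol{p}_{\mathrm{e}}(\boldsymbol{\delta}) - \boldsymbol{p}_{\mathrm{e}}(\boldsymbol{\delta}^\ast)$, obtained from \eqref{eq:Wp} using $B_{ij}=B_{ji}$; (ii) losslessness of the network, $\sum_i p_{\mathrm{e},i}(\boldsymbol{\delta}) = 0$ for every $\boldsymbol{\delta}$, which implies $\mathbf{1}^T(\boldsymbol{p}_{\mathrm{e}}(\boldsymbol{\delta})-\boldsymbol{p}_{\mathrm{e}}(\boldsymbol{\delta}^\ast))=0$; and (iii) $\boldsymbol{H}(\boldsymbol{\delta})$ defined in \eqref{eq:Lij} is symmetric and satisfies $\boldsymbol{H}(\boldsymbol{\delta})\mathbf{1}=\mathbf{0}$ (row sums vanish by inspection).

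For the kinetic term $\tfrac{1}{2}\sum_i M_i(\omega_i-\omega^\ast)^2$, the derivative is $\sum_i(\omega_i-\omega^\ast)M_i\dot{\omega}_i$. Substituting \eqref{subeq:Dynamic_w} and using the equilibrium condition \eqref{eq:equli-sync-delta} to replace $p_{\mathrm{m},i}$ gives $M_i\dot{\omega}_i = -D_i(\omega_i-\omega^\ast) - (u_i(\omega_i)-u_i(\omega^\ast)) - (p_{\mathrm{e},i}(\boldsymbol{\delta})-p_{\mathrm{e},i}(\boldsymbol{\delta}^\ast))$. For $W_{\mathrm{p}}$, identity (i) and the fact that $\dot{\delta}_i = \omega_i - \tfrac{1}{n}\sum_j \omega_j$ imply $\dot{W}_{\mathrm{p}} = \sum_i(p_{\mathrm{e},i}(\boldsymbol{\delta})-p_{\mathrm{e},i}(\boldsymbol{\delta}^\ast))(\omega_i-\tfrac{1}{n}\sum_j\omega_j)$; identity (ii) lets us drop the average, and then add/subtract $\omega^\ast$ to write this as $(\boldsymbol{p}_{\mathrm{e}}(\boldsymbol{\delta})-\boldsymbol{p}_{\mathrm{e}}(\boldsymbol{\delta}^\ast))^T(\boldsymbol{\omega}-\boldsymbol{\omega}^\ast)$. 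Adding these two contributions, the power-flow cross terms cancel cleanly, leaving $-(\boldsymbol{\omega}-\boldsymbol{\omega}^\ast)^T\boldsymbol{D}(\boldsymbol{\omega}-\boldsymbol{\omega}^\ast) - (\boldsymbol{\omega}-\boldsymbol{\omega}^\ast)^T(\boldsymbol{u}(\boldsymbol{\omega})-\boldsymbol{u}(\boldsymbol{\omega}^\ast))$.

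For the cross term $\epsilon W_{\mathrm{c}}$, the product rule produces two pieces. The first is $\epsilon(\boldsymbol{\omega}-\boldsymbol{\omega}^\ast)^T\boldsymbol{M}\,\dot{\boldsymbol{p}}_{\mathrm{e}}(\boldsymbol{\delta}) = \epsilon(\boldsymbol{\omega}-\boldsymbol{\omega}^\ast)^T\boldsymbol{M}\boldsymbol{H}(\boldsymbol{\delta})\dot{\boldsymbol{\delta}}$; here identity (iii) kills the $\tfrac{1}{n}\mathbf{1}\mathbf{1}^T\boldsymbol{\omega}$ correction in $\dot{\boldsymbol{\delta}}$, so it reduces to $\epsilon(\boldsymbol{\omega}-\boldsymbol{\omega}^\ast)^T\boldsymbol{M}\boldsymbol{H}(\boldsymbol{\delta})(\boldsymbol{\omega}-\boldsymbol{\omega}^\ast)$, which by symmetrizing and using $\boldsymbol{H}=\boldsymbol{H}^T$ equals $\tfrac{\epsilon}{2}(\boldsymbol{\omega}-\boldsymbol{\omega}^\ast)^T(\boldsymbol{M}\boldsymbol{H}(\boldsymbol{\delta})+\boldsymbol{H}(\boldsymbol{\delta})\boldsymbol{M})(\boldsymbol{\omega}-\boldsymbol{\omega}^\ast)$. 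The second piece is $\epsilon(\boldsymbol{p}_{\mathrm{e}}(\boldsymbol{\delta})-\boldsymbol{p}_{\mathrm{e}}(\boldsymbol{\delta}^\ast))^T\boldsymbol{M}\dot{\boldsymbol{\omega}}$; substituting the same expression for $M_i\dot\omega_i$ as above contributes $-\epsilon\|\boldsymbol{p}_{\mathrm{e}}(\boldsymbol{\delta})-\boldsymbol{p}_{\mathrm{e}}(\boldsymbol{\delta}^\ast)\|_2^2$, $-\epsilon(\boldsymbol{p}_{\mathrm{e}}(\boldsymbol{\delta})-\boldsymbol{p}_{\mathrm{e}}(\boldsymbol{\delta}^\ast))^T\boldsymbol{D}(\boldsymbol{\omega}-\boldsymbol{\omega}^\ast)$, and $-\epsilon(\boldsymbol{p}_{\mathrm{e}}(\boldsymbol{\delta})-\boldsymbol{p}_{\mathrm{e}}(\boldsymbol{\delta}^\ast))^T(\boldsymbol{u}(\boldsymbol{\omega})-\boldsymbol{u}(\boldsymbol{\omega}^\ast))$. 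Collecting every quadratic contribution in the variables $\boldsymbol{p}_{\mathrm{e}}(\boldsymbol{\delta})-\boldsymbol{p}_{\mathrm{e}}(\boldsymbol{\delta}^\ast)$ and $\boldsymbol{\omega}-\boldsymbol{\omega}^\ast$ reproduces exactly $-[\cdot]^T\boldsymbol{Q}(\boldsymbol{\delta})[\cdot]$, while the $\boldsymbol{u}$ terms combine into $-[\boldsymbol{\omega}-\boldsymbol{\omega}^\ast+\epsilon(\boldsymbol{p}_{\mathrm{e}}(\boldsymbol{\delta})-\boldsymbol{p}_{\mathrm{e}}(\boldsymbol{\delta}^\ast))]^T(\boldsymbol{u}(\boldsymbol{\omega})-\boldsymbol{u}(\boldsymbol{\omega}^\ast))$, giving \eqref{eq:Vdot}.

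Finally, positive definiteness of $\boldsymbol{Q}(\boldsymbol{\delta})$ for small $\epsilon$ follows by a Schur-complement argument: since $\epsilon\boldsymbol{I}\succ0$, $\boldsymbol{Q}(\boldsymbol{\delta})\succ0$ iff the Schur complement $\boldsymbol{D}-\tfrac{\epsilon}{2}(\boldsymbol{H}\boldsymbol{M}+\boldsymbol{M}\boldsymbol{H})-\tfrac{\epsilon}{4}\boldsymbol{D}^2$ is positive definite, which holds for all $\boldsymbol{\delta}\in\Theta$ whenever $\epsilon$ is taken below a constant depending only on $\boldsymbol{D}$, $\boldsymbol{M}$, and $\max_{\boldsymbol{\delta}\in\Theta}\|\boldsymbol{H}(\boldsymbol{\delta})\|$, the latter bound being uniform because entries of $\boldsymbol{H}$ on $\Theta$ are bounded. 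The main obstacle is keeping the bookkeeping tidy: the cancellation of cross $\boldsymbol{p}_{\mathrm{e}}$ terms between $\dot{W}_{\mathrm{p}}$ and the kinetic derivative, and the two reductions enabled by $\sum_i p_{\mathrm{e},i}=0$ and $\boldsymbol{H}\mathbf{1}=\mathbf{0}$, must be applied in the right order so that every $\dot{\boldsymbol{\delta}}$ cleanly becomes $\boldsymbol{\omega}-\boldsymbol{\omega}^\ast$ before the quadratic form is assembled.
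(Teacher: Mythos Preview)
Your proposal is correct and follows essentially the same approach as the paper. The paper organizes the computation by first writing out $\partial V/\partial\delta_i$ and $\partial V/\partial\omega_i$ and then applying the chain rule, whereas you differentiate the kinetic, $W_{\mathrm p}$, and $W_{\mathrm c}$ pieces separately along trajectories; both routes invoke the same three structural facts (the equilibrium identity \eqref{eq:equli-sync-delta}, $\mathbf{1}^T\boldsymbol{p}_{\mathrm e}(\boldsymbol{\delta})=0$, and $\boldsymbol{H}(\boldsymbol{\delta})\mathbf{1}=\mathbf{0}$) to replace every occurrence of $\dot{\boldsymbol{\delta}}$ by $\boldsymbol{\omega}-\boldsymbol{\omega}^\ast$, and both close with the identical Schur-complement argument for $\boldsymbol{Q}(\boldsymbol{\delta})\succ0$.
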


The proof is given in Appendix~\ref{appendices:lemma_dot_V}.
The cross term $\left[\boldsymbol{\omega}-\boldsymbol{\omega}^\ast + \epsilon\left(\boldsymbol{p}_\mathrm{e}(\boldsymbol{\delta})-\boldsymbol{p}_\mathrm{e}(\boldsymbol{\delta}^\ast)\right)\right]^T\left(\boldsymbol{u}(\boldsymbol{\omega})-\boldsymbol{u}(\boldsymbol{\omega}^\ast)\right)$ generally complicates the analysis of $\dot{V}(\boldsymbol{\delta}, \boldsymbol{\omega})$. But when $u_i(\omega_i)$ is monotonically increasing with respect to $\omega_i$,  $\left(u_i(\omega_i)-u_i(\omega^*)\right)$ is the same sign with $\left(\omega_i-\omega^*\right)$ and leads to nonnegative cross terms for small $\epsilon$, implying that  $\dot{V}(\boldsymbol{\delta}, \boldsymbol{\omega})<0$, $\forall(\boldsymbol{\delta}, \boldsymbol{\omega})\in \mathcal{D}\setminus{(\boldsymbol{\delta}^\ast, \boldsymbol{\omega}^\ast)}$ and thus the system is locally asymptotically stable at the equilibrium point $(\boldsymbol{\delta}^\ast, \boldsymbol{\omega}^\ast)$. In the next lemma, we further show local exponential stability of the equilibrium.  


\begin{lem}[Bounds on the time derivative]\label{lemma:Exponential_stable}
If $u_i(\omega_i)$ is monotonically increasing with respect to $\omega_i$, then there exists a constant $c>0$ such that $\dot{V}(\boldsymbol{\delta}, \boldsymbol{\omega})\leq -cV(\boldsymbol{\delta}, \boldsymbol{\omega})$.
\end{lem}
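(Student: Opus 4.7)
The plan is to combine Lemma~\ref{lemma: dot_V} with the monotonicity hypothesis to show $\dot V(\boldsymbol{\delta},\boldsymbol{\omega}) \leq -c'\bigl(\|\bm{\delta}-\bm{\delta}^{*}\|_2^{2}+\|\bm{\omega}-\bm{\omega}^{*}\|_2^{2}\bigr)$ for some $c'>0$, and then to convert this into the desired estimate $\dot V \leq -cV$ by invoking the \emph{upper} bound from Lemma~\ref{lemma:Pos_Lyapunov}, obtaining $c = c'/\alpha_2$. Using the shorthand $\tilde{\boldsymbol{\omega}}:=\boldsymbol{\omega}-\bm{1}\omega^{\ast}$, $\tilde{\boldsymbol{p}}_{\mathrm{e}}:=\boldsymbol{p}_{\mathrm{e}}(\boldsymbol{\delta})-\boldsymbol{p}_{\mathrm{e}}(\boldsymbol{\delta}^{\ast})$, $\tilde{\boldsymbol{u}}:=\boldsymbol{u}(\boldsymbol{\omega})-\boldsymbol{u}(\boldsymbol{\omega}^{\ast})$, the expression from Lemma~\ref{lemma: dot_V} splits into a quadratic form in $[\tilde{\boldsymbol{p}}_{\mathrm{e}};\tilde{\boldsymbol{\omega}}]$ governed by $\boldsymbol{Q}(\boldsymbol{\delta})$, a ``good'' term $-\tilde{\boldsymbol{\omega}}^{\top}\tilde{\boldsymbol{u}}$, and a ``bad'' cross term $-\epsilon\tilde{\boldsymbol{p}}_{\mathrm{e}}^{\top}\tilde{\boldsymbol{u}}$. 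Componentwise monotonicity of each $u_i$ gives $\tilde{\boldsymbol{\omega}}^{\top}\tilde{\boldsymbol{u}} = \sum_i(\omega_i-\omega^{\ast})(u_i(\omega_i)-u_i(\omega^{\ast})) \geq 0$, so the good term is non-positive and can simply be discarded.

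The substantive step is to dominate the cross term $-\epsilon\tilde{\boldsymbol{p}}_{\mathrm{e}}^{\top}\tilde{\boldsymbol{u}}$ by the quadratic form. Forward invariance of the sublevel sets of $V$ (a consequence of the strict decrease of $V$ stated informally after Lemma~\ref{lemma: dot_V}) confines $(\boldsymbol{\delta},\boldsymbol{\omega})$ to a compact subset of $\mathcal{D}$. On this compact set the stacked-ReLU controller is Lipschitz, so $\|\tilde{\boldsymbol{u}}\|_2 \leq L\|\tilde{\boldsymbol{\omega}}\|_2$ for some $L>0$, and Young's inequality delivers $\epsilon|\tilde{\boldsymbol{p}}_{\mathrm{e}}^{\top}\tilde{\boldsymbol{u}}| \leq (\epsilon L/2)\bigl(\|\tilde{\boldsymbol{p}}_{\mathrm{e}}\|_2^{2}+\|\tilde{\boldsymbol{\omega}}\|_2^{2}\bigr)$. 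Combined with a uniform lower bound $\lambda_Q>0$ on the smallest eigenvalue of $\boldsymbol{Q}(\boldsymbol{\delta})$ on the compact set, choosing $\epsilon$ small enough (possibly smaller than the threshold already required in Lemma~\ref{lemma: dot_V}) yields $\dot V \leq -\kappa\bigl(\|\tilde{\boldsymbol{p}}_{\mathrm{e}}\|_2^{2}+\|\tilde{\boldsymbol{\omega}}\|_2^{2}\bigr)$ for some $\kappa>0$.

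It remains to trade $\|\tilde{\boldsymbol{p}}_{\mathrm{e}}\|_2^{2}$ for $\|\bm{\delta}-\bm{\delta}^{*}\|_2^{2}$. In center-of-inertia coordinates $\boldsymbol{\delta}$ lives in the hyperplane $\bm{1}^{\top}\boldsymbol{\delta}=0$, and the matrix $\boldsymbol{H}(\boldsymbol{\delta})$ defined in \eqref{eq:Lij} is a weighted graph Laplacian whose nonzero eigenvalues are bounded below uniformly because the network is connected and $|\delta_{ij}|\leq \pi/2-\eta$ for some $\eta>0$ on the invariant set. By the mean-value theorem, $\tilde{\boldsymbol{p}}_{\mathrm{e}} = \bigl(\int_0^{1}\boldsymbol{H}(\boldsymbol{\delta}^{\ast}+s(\boldsymbol{\delta}-\boldsymbol{\delta}^{\ast}))\,ds\bigr)(\bm{\delta}-\bm{\delta}^{*})$, so $\|\tilde{\boldsymbol{p}}_{\mathrm{e}}\|_2^{2}\geq \beta\|\bm{\delta}-\bm{\delta}^{*}\|_2^{2}$ for some $\beta>0$. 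Plugging into the preceding bound and applying the upper estimate in Lemma~\ref{lemma:Pos_Lyapunov} concludes the proof with $c = \kappa\min(\beta,1)/\alpha_2$.

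The chief obstacle is the cross term $-\epsilon\tilde{\boldsymbol{p}}_{\mathrm{e}}^{\top}\tilde{\boldsymbol{u}}$: unlike $-\tilde{\boldsymbol{\omega}}^{\top}\tilde{\boldsymbol{u}}$, its sign is not pinned down by monotonicity alone, and controlling it requires both a Lipschitz handle on $\boldsymbol{u}$ and a compact invariant set that prevents blow-up of $\tilde{\boldsymbol{p}}_{\mathrm{e}}$. A subtler point is that the Lipschitz constant $L$ depends on the learned controller, so the required $\epsilon$ and the final rate $c$ depend on both the physical parameters and the trained network; since $\epsilon$ is only a Lyapunov-analysis parameter that never appears in the implemented control law, this coupling is merely a bookkeeping nuisance rather than a substantive restriction.
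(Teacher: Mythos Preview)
Your proof follows the paper's line almost step for step: decompose $\dot V$ via Lemma~\ref{lemma: dot_V}, dispose of the controller term, bound by the $\boldsymbol{Q}(\boldsymbol{\delta})$-quadratic through Rayleigh--Ritz, trade $\|\tilde{\boldsymbol{p}}_{\mathrm{e}}\|_2^2$ for $\|\boldsymbol{\delta}-\boldsymbol{\delta}^*\|_2^2$ via the Laplacian second-eigenvalue bound (the paper cites \cite[Lemma~4]{weitenberg2018exponential} for exactly your $\beta$ step, with $\gamma_1=\min_{\tilde{\boldsymbol{\delta}}\in\Theta}\lambda_2(\boldsymbol{H}(\tilde{\boldsymbol{\delta}}))^2$), and close with the upper bound of Lemma~\ref{lemma:Pos_Lyapunov}; the final constants match, $c=\bigl(\min_{\boldsymbol{\delta}}\lambda_{\min}(\boldsymbol{Q}(\boldsymbol{\delta}))\bigr)\min(1,\gamma_1)/\alpha_2$ versus your $\kappa\min(\beta,1)/\alpha_2$. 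The one substantive divergence is the controller cross term $[\tilde{\boldsymbol{\omega}}+\epsilon\tilde{\boldsymbol{p}}_{\mathrm{e}}]^T\tilde{\boldsymbol{u}}$. The paper does \emph{not} split it into ``good'' and ``bad'' parts: it writes $\tilde{\boldsymbol{u}}=\boldsymbol{K}(\boldsymbol{\omega})\tilde{\boldsymbol{\omega}}$ with $\boldsymbol{K}$ the diagonal of secant slopes $(u_i(\omega_i)-u_i(\omega^*))/(\omega_i-\omega^*)\geq 0$ and asserts that $\tilde{\boldsymbol{\omega}}^T\boldsymbol{K}\tilde{\boldsymbol{\omega}}+\epsilon\,\tilde{\boldsymbol{p}}_{\mathrm{e}}^T\boldsymbol{K}\tilde{\boldsymbol{\omega}}\geq 0$ for $\epsilon$ small, after which the whole term is simply dropped. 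You instead keep the sign-indefinite piece $\epsilon\,\tilde{\boldsymbol{p}}_{\mathrm{e}}^T\tilde{\boldsymbol{u}}$ and absorb it into the $\boldsymbol{Q}$-quadratic via a Lipschitz bound on $\boldsymbol{u}$ plus Young's inequality. Your route is more explicit about why that piece is dominated, but it imports a Lipschitz hypothesis the lemma does not state (you correctly note the stacked-ReLU networks supply it globally), and it makes both the admissible $\epsilon$ and the rate $c$ depend on the learned Lipschitz constant $L$; the paper's argument stays within the bare monotonicity hypothesis, avoids the forward-invariance detour, and produces a rate independent of the controller.
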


\begin{proof}

First, we show that the cross term related to $u_i(\omega_i)$ is nonnegative for sufficiently small $\epsilon$.
Define
\begin{align*}
    k_{i}(\omega_i):=
    \begin{dcases}
    \dfrac{u_i(\omega_i)-u_i(\omega_i^\ast)}{\omega_i-\omega_i^\ast} & \text{if}\ \omega_i \neq\omega_i^\ast\\
    0& \text{if}\ \omega_i=\omega_i^\ast
    \end{dcases}\,,\quad\forall i \in \left[n\right]\,.
\end{align*}
Then, $\boldsymbol{K}(\boldsymbol{\omega}) := \diag{k_{i}(\omega_i), i \in \mathcal{V}} \in \real^{n \times n}\succeq0$ if $u_i(\omega_i)$ is monotonically increasing with respect to $\omega_i$. Hence, 
\begin{align}\label{eq:bound-cross}
&\left[\boldsymbol{\omega}-\boldsymbol{\omega}^\ast + \epsilon\left(\boldsymbol{p}_\mathrm{e}(\boldsymbol{\delta})-\boldsymbol{p}_\mathrm{e}(\boldsymbol{\delta}^\ast)\right)\right]^T\left(\boldsymbol{u}(\boldsymbol{\omega})-\boldsymbol{u}(\boldsymbol{\omega}^\ast)\right)\nonumber\\
=&\left(\boldsymbol{\omega}-\boldsymbol{\omega}^\ast\right)^T\boldsymbol{K}(\boldsymbol{\omega})\left(\boldsymbol{\omega}-\boldsymbol{\omega}^\ast\right)\nonumber\\ &+\epsilon\left(\boldsymbol{p}_\mathrm{e}(\boldsymbol{\delta})-\boldsymbol{p}_\mathrm{e}(\boldsymbol{\delta}^\ast)\right)^T\boldsymbol{K}(\boldsymbol{\omega})\left(\boldsymbol{\omega}-\boldsymbol{\omega}^\ast\right) \geq 0 \nonumber
\end{align}
for small enough $\epsilon$.

Then, $\dot{V}(\boldsymbol{\delta}, \boldsymbol{\omega})$ can be by bounded by the quadratic term related to $\boldsymbol{Q}(\boldsymbol{\delta})$ in \eqref{eq:Vdot} as follows:
\begin{align}
    &\dot{V}(\boldsymbol{\delta}, \boldsymbol{\omega})\\
    &\leq
    -\begin{bmatrix}\boldsymbol{p}_\mathrm{e}(\boldsymbol{\delta})\!-\!\boldsymbol{p}_\mathrm{e}(\boldsymbol{\delta}^\ast)\nonumber\\\boldsymbol{\omega}\!-\!\boldsymbol{\omega}^\ast\end{bmatrix}^T\!\!\boldsymbol{Q}(\boldsymbol{\delta})\begin{bmatrix}\boldsymbol{p}_\mathrm{e}(\boldsymbol{\delta})\!-\!\boldsymbol{p}_\mathrm{e}(\boldsymbol{\delta}^\ast)\nonumber\\\boldsymbol{\omega}\!-\!\boldsymbol{\omega}^\ast\end{bmatrix}\nonumber\\
    &\stackrel{(\mathrm{a})}{\leq} - \lambda_\mathrm{min}(\boldsymbol{Q}(\boldsymbol{\delta}))\left(\|\boldsymbol{p}_\mathrm{e}(\boldsymbol{\delta})\!-\!\boldsymbol{p}_\mathrm{e}(\boldsymbol{\delta}^\ast)\|^2_2+\|\boldsymbol{\omega}\!-\!\boldsymbol{\omega}^\ast\|^2_2\right)
    \nonumber\\
    &\stackrel{(\mathrm{b})}{\leq} - \lambda_\mathrm{min}(\boldsymbol{Q}(\boldsymbol{\delta}))\left(\gamma_1\|\boldsymbol{\delta}\!-\!\boldsymbol{\delta}^\ast\|^2_2+\|\boldsymbol{\omega}\!-\!\boldsymbol{\omega}^\ast\|^2_2\right)
    \nonumber\\
    &  \leq -\lambda_\mathrm{min}(\boldsymbol{Q}(\boldsymbol{\delta}))\min(1, \gamma_1)\left(\|\boldsymbol{\delta}\!-\!\boldsymbol{\delta}^\ast\|^2_2+\|\boldsymbol{\omega}\!-\!\boldsymbol{\omega}^\ast\|^2_2\right)
    \nonumber\\
    &\stackrel{(\mathrm{c})}{\leq} - \lambda_\mathrm{min}(\boldsymbol{Q}(\boldsymbol{\delta}))\min(1, \gamma_1)\dfrac{1}{\alpha_2}V(\boldsymbol{\delta}, \boldsymbol{\omega}) \nonumber
    \\
    & \leq- c V(\boldsymbol{\delta}, \boldsymbol{\omega})\label{eq:bound-Q}
\end{align}
with $$c:=\left(\min_{\boldsymbol{\delta}:|\delta_i-\delta_j|\in [0,\pi/2), \forall \{i,j\}\in \mathcal{E}} \!\lambda_\mathrm{min}(\boldsymbol{Q}(\boldsymbol{\delta}))\right)\dfrac{\min(1, \gamma_1)}{\alpha_2}>0,$$
where (a) is given by the Rayleigh-Ritz theorem, (b) is  by~\cite[Lemma~4]{weitenberg2018exponential} with $
\gamma_{1}:=\min _{\tilde{\delta} \in \Theta} \lambda_{2}(\bm{H}(\bm{\tilde{\delta}}))^{2}$, and (c) follows from  Lemma~\ref{lemma:Pos_Lyapunov}.
\end{proof}

\subsection{Design of Neural Network Controllers} \label{sec:design}
In this paper, we parametrize the controllers $u_{\varphi_i}(\omega_i)$ by a single hidden layer neural network. 
We assume that the processes such as automatic generation
control (AGC) adjust the power setpoint of generators to make the net power injection around zero, i.e., $\sum_{i=1}^{n}p_{\mathrm{m},i}=0$. For controllers $u_i(\omega_i)$'s that provide primary frequency response, we set $u_i(0)=0$ so the controllers take no action when there is no frequency deviation. 
By Theorem~\ref{theorem:Exponential_stable}, we design
the neural networks to have the following structures \highlight{such that  the controller will be locally exponentially stabilizing}:
\begin{enumerate}
    \item $u_{\varphi_i}(\omega_i)$ is monotonically increasing;
    \item $u_{\varphi_i}(\omega_i)=0$ for $\omega_i=0$;
    \item  $\underline{u}_i \leq u_{\varphi_i}(\omega_i)\leq \overline{u}_i$ (saturation constraints). 
\end{enumerate}

The first two requirements are equivalent to designing a monotonic increasing function through the origin. This is constructed by decomposing the function into positive and negative parts as $f_i(\omega_i)=f_i^+(\omega_i)+f_i^-(\omega_i)$, where $f_i^+(\omega_i)$ is monotonic increasing for $\omega_i>0$ and zero when $\omega_i\leq0$; $f_i^-(\omega_i)$ is monotonic increasing for $\omega_i<0$ and zero when $\omega_i\geq0$. The saturation constraints can be satisfied by hard thresholding the output of the neural network.

The function $f_i^+(\omega_i)$ and $f_i^-(\omega_i)$ are constructed using a single-layer neural network designed by stacking the ReLU function  $\sigma(x)=\max(x,0)$.
Let $m$ be the number of hidden units. For $f_i^+(\omega_i)$, let  $\bm{q}_i=[\begin{matrix}
 q_i^{1}& q_i^{2} & \cdots&q_i^{m}
\end{matrix}]$ be the weight vector of bus $i$; $\bm{b}_i=[\begin{matrix} b_i^{1}& b_i^{2} & \cdots &b_i^{m}\end{matrix}]^\intercal$ be the corresponding bias vector. For $f_i^-(\omega_i)$, let $\bm{z}_i=[\begin{matrix} z_i^{1}&z_i^{2}&\cdots&z_i^{m}\end{matrix}]$ be the weights vector and $\bm{c}_i=[\begin{matrix} c_i^{1}& c_i^{2} & \cdots & c_i^{m}\end{matrix}]^\intercal$ be the bias vector. Denote $\mathbf{1}\in \mathbb{R}^m$ as the all $1$'s column vector. The detailed construction of $f_i^+(\omega_i)$ and $f_i^-(\omega_i)$ is given in Lemma~\ref{lemma:u_f+}. 
\begin{lem}\label{lemma:u_f+}
Let $\sigma(x)=\max(x,0)$ be the ReLU function.  The stacked ReLU function constructed by \eqref{eq:f+} is monotonic increasing for $\omega_i>0$ and zero when $\omega_i\leq 0$.
\begin{subequations}\label{eq:f+}
\begin{align}
     &f_i^+(\omega_i)=\bm{q}_i\sigma(\mathbf{1}\omega_i+\bm{b}_i)\\
     \mbox{where }
     &\sum_{j=1}^{l} q_i^{j}\geq 0, \quad \forall l=1,2,\cdots,m
     \label{eq:f+2}\\
     & b_i^{1}=0, b_i^{l}\leq b_i^{(l-1)}, \quad \forall l=2,3,\cdots,m\label{eq:f+3}
\end{align}
\end{subequations}
 The stacked ReLU function constructed by \eqref{eq:f-} is monotonic increasing for $\omega_i< 0$ and zero when $\omega_i\geq 0$.
\begin{subequations}\label{eq:f-}
\begin{align}
     &f_i^-(\omega_i)=\bm{z}_i\sigma(-\mathbf{1}\omega_i+\bm{c}_i)\\
     \mbox{where } 
     &\sum_{j=1}^{l} z_i^{j}\leq 0, \quad \forall l=1,2,\cdots,m\label{eq:f-2}\\
     & c_i^{1}=0, c_i^{l}\leq c_i^{(l-1)}, \quad \forall l=2,3,\cdots,m\label{eq:f-3}
\end{align}
\end{subequations}
\end{lem}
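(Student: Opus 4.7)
The plan is to verify the two claimed properties of $f_i^+$, and then to obtain the statement for $f_i^-$ by a symmetric argument. For the vanishing property on $\omega_i \leq 0$, I would first observe from condition \eqref{eq:f+3} that a simple induction starting from $b_i^1 = 0$ gives $b_i^j \leq 0$ for every $j$. Then for any $\omega_i \leq 0$ the ReLU argument satisfies $\omega_i + b_i^j \leq 0$ for every $j$, so $\sigma(\omega_i + b_i^j) = 0$ and hence $f_i^+(\omega_i) = 0$.

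For monotonicity on $\omega_i > 0$, I would perform a piecewise-linear analysis. The activation threshold of the $j$-th hidden unit is $-b_i^j \geq 0$, and by the bias ordering these thresholds are sorted as $0 = -b_i^1 \leq -b_i^2 \leq \cdots \leq -b_i^m$. On each open subinterval $\omega_i \in (-b_i^l, -b_i^{l+1})$ (with the convention $-b_i^{m+1} := +\infty$), exactly the first $l$ units are active, so $f_i^+$ reduces to the affine map
\[
f_i^+(\omega_i) = \sum_{j=1}^{l} q_i^j \bigl(\omega_i + b_i^j\bigr),
\]
whose slope is $\sum_{j=1}^{l} q_i^j \geq 0$ by condition \eqref{eq:f+2}. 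Thus $f_i^+$ is nondecreasing on each piece, and because it is continuous (being a finite sum of continuous ReLU terms), nondecreasingness propagates across the hinges to yield the claimed monotonicity on all of $\omega_i > 0$.

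For $f_i^-$, the substitution $\tilde\omega_i = -\omega_i$ reduces the problem to essentially the same form as $f_i^+$. One checks from \eqref{eq:f-3} that $c_i^j \leq 0$ for all $j$, so that $\sigma(-\omega_i + c_i^j) = 0$ whenever $\omega_i \geq 0$, giving $f_i^-(\omega_i) = 0$ on that half-line. For $\omega_i < 0$, the activation thresholds are $c_i^j$ themselves (with $0 = c_i^1 \geq c_i^2 \geq \cdots \geq c_i^m$), and on each subinterval $\omega_i \in (c_i^{l+1}, c_i^l)$ one obtains
\[
f_i^-(\omega_i) = \sum_{j=1}^{l} z_i^j (-\omega_i + c_i^j),
\]
whose derivative in $\omega_i$ is $-\sum_{j=1}^{l} z_i^j \geq 0$ by condition \eqref{eq:f-2}. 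Continuity then upgrades piecewise nondecreasingness to global nondecreasingness on $\omega_i < 0$.

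The argument is essentially bookkeeping rather than technically deep, so there is no genuine obstacle. The only point requiring care is the sign flip between \eqref{eq:f+2} and \eqref{eq:f-2}, which is precisely what compensates for the extra minus sign inside the ReLU in $f_i^-$; once one keeps track of which units are active on which piece under the sorted thresholds, both claims follow directly from the partial-sum conditions together with continuity of ReLU combinations.
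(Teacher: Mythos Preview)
Your proposal is correct and follows essentially the same approach as the paper: both arguments exploit the bias ordering \eqref{eq:f+3}/\eqref{eq:f-3} to order the activation thresholds, identify the slope on each linear piece as the partial sum $\sum_{j=1}^{l} q_i^j$ (respectively $-\sum_{j=1}^{l} z_i^j$), and invoke \eqref{eq:f+2}/\eqref{eq:f-2} to conclude monotonicity, with the $f_i^-$ case handled by symmetry. Your write-up is in fact somewhat more careful than the paper's, as you make the vanishing-on-the-half-line argument and the continuity glue across hinges explicit.
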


\begin{proof}
 Note that the ReLU function $\sigma(x)$ is linear with $x$ when activated $(x>0)$ and equals to zero when deactivated $(x \leq 0)$, we construct the monotonic increasing function $f_i^+(\omega_i)$ by stacking the function $g_i^l(\omega_i)=q_i^{l}\sigma(\omega_i+b_i^{l})$, 
 as illustrated by Fig.~\ref{fig:Mono_Structure}. 
 Since
$b_i^{1}=0$ and $b_i^{l}\leq b_i^{(l-1)}, \forall 1\leq l \leq m$, $g_i^l(\omega_i)$  is activated in sequence from $g_i^1(\omega_i)$ to $g_i^m(\omega_i)$ with the increase of $\omega_i$. In this way, the stacked function is a piece-wise linear function and the slope for each piece is $\sum_{j=1}^{l} q_i^j$. 
Monotonic property can be satisfied as long as the slope of all the pieces are positive, i.e., $\sum_{j=1}^{l} q_i^l\geq 0, \forall 1\leq l \leq m$. Similarly, $f_i^-(\omega_i)$ also construct by ReLU function activated for negative $w_i$ in sequence corresponding to $c_i^l$ for $l=1,\cdots,m$. $\sum_{j=1}^{l} z_i^{j}\leq 0$ means that all the slope of the piece-wise linear function is positive and therefore guarantees monotonicity.
\begin{figure}[t]	
	\centering
	\includegraphics[scale=0.65]{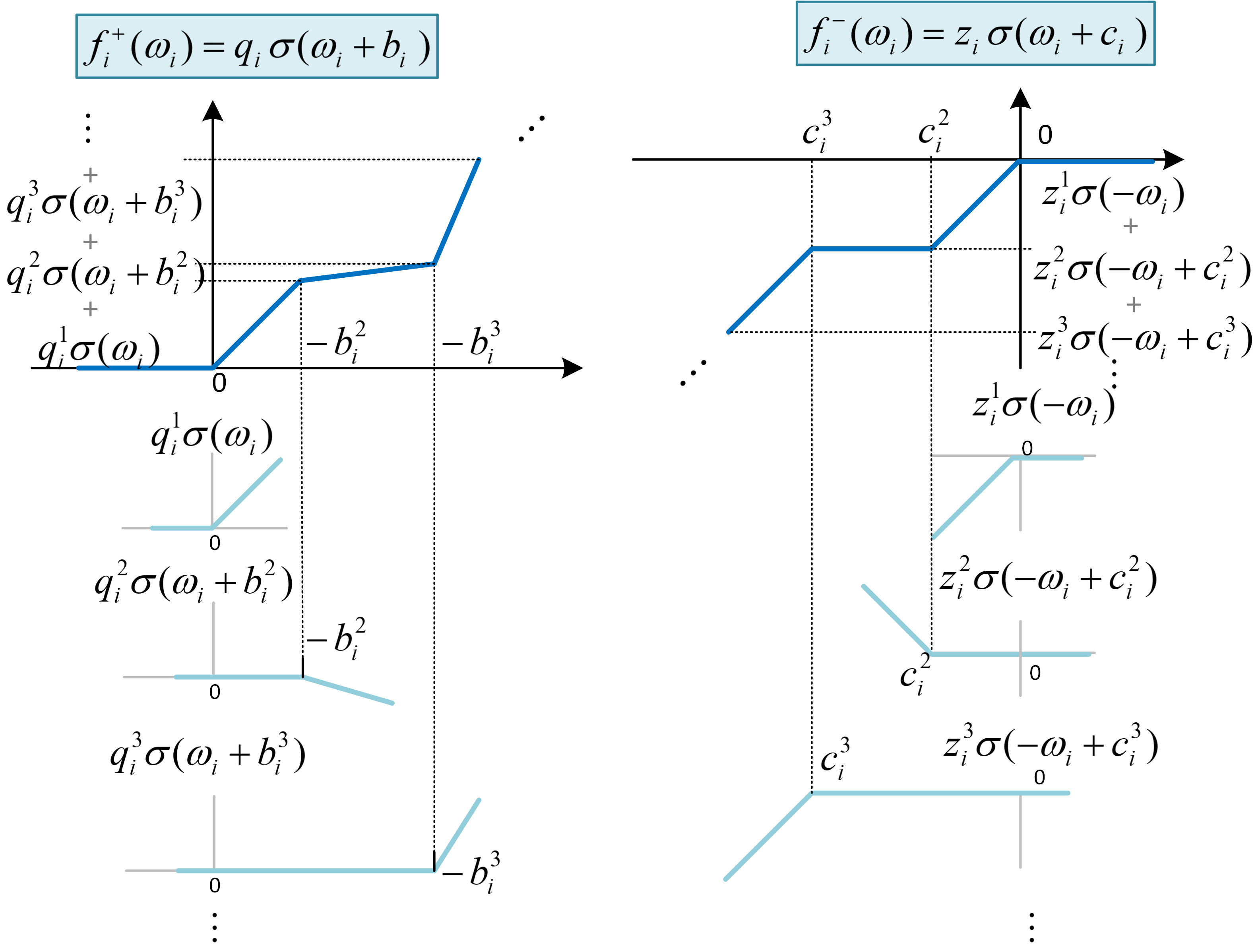}
	\caption{Stacked ReLU neural network to formulate a monotonic increasing function through the origin}
	\label{fig:Mono_Structure}
\end{figure}
\end{proof}
Note that there still exists inequality constraints in \eqref{eq:f+} and \eqref{eq:f-}, which makes the training of the neural networks cumbersome.  We can reformulate the weights to get an equivalent representation that is easier to deal with in training. Define the non-negative vectors $\hat{\bm{q}}_i=\left[\begin{matrix} \hat{q}_i^{1}&\cdots&\hat{q}_i^{m}\end{matrix}\right]$ and $\hat{\bm{b}}_i=\left[\begin{matrix}\hat{b}_i^{1}&\cdots&\hat{b}_i^{m}\end{matrix}\right]^\intercal$. Then, \eqref{eq:f+2} is satisfied if $q_i^{1}=\hat{q}_i^{1}$, $q_i^{l}=\hat{q}_i^{l}-\hat{q}_i^{(l-1)}$ for  $l=2,\cdots,m$. \eqref{eq:f+3} is satisfied if $b_i^{1}=0$, $ b_i^{l}=-\sum_{j=2}^{l} \hat{b}_i^{j}$ for  $l=2,\cdots,m$. 
Similarly, define $\hat{\bm{z}}_i=\left[\begin{matrix}\hat{z}_i^{1}&\cdots&\hat{z}_i^{m}\end{matrix}\right]\geq 0$  and $\hat{\bm{c}}_i=\left[\begin{matrix}\hat{c}_i^{1}&\cdots&\hat{c}_i^{m}\end{matrix}\right]^\intercal\geq 0$. Then, \eqref{eq:f-2} is satisfied if $z_i^{1}=-\hat{z}_i^{1}$, $ z_i^{l}=-\hat{z}_i^{l}+\hat{z}_i^{(l-1)}$ for  $l=2,\cdots,m$. \eqref{eq:f-3} is satisfied if $c_i^{1}=0$, $ c_i^{l}=-\sum_{j=2}^{l} \hat{c}_i^{l}$ for  $l=2,\cdots,m$. If the dead-band of the frequency deviation within the range $[-d,d]$ is required, it can be easily satisfied by setting $b_i^2=-d, q_i^1=0$ and  $c_i^2=-d, z_i^1=0$ in \eqref{eq:f+} and \eqref{eq:f-}.\footnote{A deadband is often enforced for generator droop control to reduce mechanical stress. For inverters, we do not set mandatory dead-bands.}

The next Theorem states the converse of Lemma~\ref{lemma:u_f+}, that is, the constructions in \eqref{eq:f+} and \eqref{eq:f-} suffice to approximate all functions of interest. 
\begin{thm}\label{theorem:Universial Approximation}
Let $r(x)$ be any continuous, Lipschitz and bounded monotonic function through the origin with bounded derivatives, mapping compact set $\mathbb{X}$ to  $ \mathbb{R}$.  Then there exists a function $f(x)=f^+(x)+f^-(x)$ constructed by \eqref{eq:f+} and \eqref{eq:f-} such that, for any $\epsilon$ and any $x \in \mathbb{X}$, $\left|r(x)-f(x)\right|<\epsilon$. 
\end{thm}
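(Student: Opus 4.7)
The plan is to reduce the claim to a classical piecewise-linear approximation argument that is compatible with the stacked-ReLU parameterization. Since $\mathbb{X}$ is compact, it is contained in some interval $[-M,M]$. Using $r(0)=0$ together with monotonicity, the split $r(x)=r^+(x)+r^-(x)$, with $r^+(x):=\max(r(x),0)$ vanishing on $[-M,0]$ and nondecreasing on $[0,M]$, and $r^-(x):=\min(r(x),0)$ vanishing on $[0,M]$ and nondecreasing on $[-M,0]$, preserves the monotonicity, Lipschitz, and boundedness assumptions on each half. It then suffices to approximate $r^+$ on $[0,M]$ by a function $f^+$ of the form \eqref{eq:f+} within $\epsilon/2$, and $r^-$ on $[-M,0]$ by an $f^-$ of the form \eqref{eq:f-} within $\epsilon/2$.

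For the positive part I would build a piecewise-linear interpolant. Pick knots $0=y_0<y_1<\dots<y_{m-1}<y_m=M$ evenly spaced by $h:=M/m$, let $\tilde r^+$ be the continuous piecewise-linear interpolant of $r^+$ at these knots, and let $s_\ell:=(r^+(y_\ell)-r^+(y_{\ell-1}))/h\ge 0$ be the slopes (nonnegative by monotonicity). A one-line estimate using the $L$-Lipschitz property of $r^+$ gives $\sup_{x\in[0,M]}|r^+(x)-\tilde r^+(x)|\le Lh = LM/m$, which drops below $\epsilon/2$ once $m$ is taken large enough.

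I next realize $\tilde r^+$ exactly as a stacked-ReLU network of the form \eqref{eq:f+}. Choose biases $b_i^1=0$ and $b_i^\ell=-y_{\ell-1}$ for $\ell=2,\dots,m$, so that \eqref{eq:f+3} is satisfied because $y_{\ell-1}\ge y_{\ell-2}$. Choose weights $q_i^1=s_1$ and $q_i^\ell=s_\ell-s_{\ell-1}$ for $\ell\ge 2$, so that the cumulative sum $\sum_{j=1}^\ell q_i^j=s_\ell\ge 0$ verifies \eqref{eq:f+2}. Because the ReLUs activate in the order prescribed by the decreasing biases, a short induction on $\ell$ shows that $\bm{q}_i\sigma(\mathbf{1}\omega_i+\bm{b}_i)$ has slope $s_\ell$ on $[y_{\ell-1},y_\ell]$, vanishes at $\omega_i=0$, and equals $0$ for $\omega_i\le 0$; hence it coincides with $\tilde r^+$ on $[0,M]$. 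A mirror construction on $[-M,0]$ using \eqref{eq:f-} delivers $f^-$; summing the two halves and applying the triangle inequality gives $|r(x)-f(x)|<\epsilon$ on $\mathbb{X}$.

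The main obstacle I anticipate is the clean book-keeping in the exact-representability step: one needs the vanishing of $f^+$ on the negative half-line and of $f^-$ on the positive half-line to hold simultaneously, so that the two stacked-ReLU summands do not corrupt each other when added. This is a consequence of the sign convention in \eqref{eq:f+} and \eqref{eq:f-} (the bias ordering together with $b^1=0$ and $c^1=0$ forces each piece to be inactive on the opposite side of the origin), but it is the one ingredient that must be carefully verified rather than just invoked. Everything else---the Lipschitz error bound, the nonnegativity of cumulative slopes, and the ordering of the biases---follows immediately from the hypotheses on $r$.
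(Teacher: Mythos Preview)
Your proposal is correct and follows essentially the same route as the paper: build an equispaced piecewise-linear interpolant of $r$, bound the error by the Lipschitz constant times the mesh width (the paper phrases this via the mean value theorem and the derivative bound, which is equivalent under the stated hypotheses), and then realize the interpolant exactly as a stacked-ReLU by setting the biases to minus the knot locations and the weights to successive slope differences. Your explicit decomposition $r=r^{+}+r^{-}$ and your attention to the sign of the biases (so that $f^{+}$ vanishes on $(-\infty,0]$ and $f^{-}$ on $[0,\infty)$) are in fact cleaner than the paper's ``without loss of generality'' treatment, but the underlying argument is the same.
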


The proof is given in Appendix~\ref{appendices:stacked-ReLU}. 
Note that $f(x)$ is a single-layer neural network. When approximating an arbitrary function, the number of neurons and the height will depend on $\epsilon$.
Since the controller in this paper is bounded, the stacked-ReLU neural network with limited number of neurons is sufficient for parameterization. The last step is to bound the output of the neural networks, which can be done easily using ReLU activation functions. 
\begin{lem}\label{lemma:u_f+-}
 The neural network controller $u_i(\omega_i)$ given below is a monotonic increasing function through the origin and bounded in $[\underline{u}_i, \overline{u}_i]$ for all $i=1,\cdots,N$:
\begin{equation}\label{eq:u_f+-}
\begin{split}
    u_i(\omega_i)=&\overline{u}_i-\sigma(\overline{u}_i -f_i^+(\omega_i)-f_i^-(\omega_i))\\
    &+\sigma(\underline{u}_i -f_i^+(\omega_i)-f_i^-(\omega_i))
\end{split}
\end{equation}
\end{lem}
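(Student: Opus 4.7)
The plan is to show that the construction in \eqref{eq:u_f+-} is exactly a clipped (saturated) version of the unconstrained stacked-ReLU output $h_i(\omega_i) := f_i^+(\omega_i) + f_i^-(\omega_i)$, and then invoke Lemma~\ref{lemma:u_f+} to conclude. Concretely, I would first recall from Lemma~\ref{lemma:u_f+} that $h_i$ is a continuous, monotonically increasing function with $h_i(0)=0$ (since $f_i^+(0)=0$ and $f_i^-(0)=0$ by construction, and the sum of two monotonically increasing functions is monotonically increasing). I also assume without loss of generality that $\underline{u}_i \le 0 \le \overline{u}_i$, which is the natural setting in which the controller is allowed to do nothing at zero frequency deviation.

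Next I would perform a three-way case analysis on the value of $h_i(\omega_i)$. If $h_i(\omega_i) \le \underline{u}_i$, both arguments of the ReLU in \eqref{eq:u_f+-} are nonnegative, and a direct substitution yields $u_i(\omega_i) = \overline{u}_i - (\overline{u}_i - h_i(\omega_i)) + (\underline{u}_i - h_i(\omega_i)) = \underline{u}_i$. If $\underline{u}_i \le h_i(\omega_i) \le \overline{u}_i$, only the first ReLU is active, yielding $u_i(\omega_i) = \overline{u}_i - (\overline{u}_i - h_i(\omega_i)) = h_i(\omega_i)$. If $h_i(\omega_i) \ge \overline{u}_i$, both ReLUs vanish and $u_i(\omega_i) = \overline{u}_i$. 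Thus \eqref{eq:u_f+-} implements the clipping operator
\begin{equation*}
u_i(\omega_i) \;=\; \max\!\bigl(\underline{u}_i,\; \min(\overline{u}_i,\; h_i(\omega_i))\bigr).
\end{equation*}

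From this closed form the three required properties are immediate. The bound $\underline{u}_i \le u_i(\omega_i) \le \overline{u}_i$ is the very definition of the clipping operator. For the origin condition, $h_i(0)=0$ together with $\underline{u}_i\le 0 \le \overline{u}_i$ places us in the middle case, so $u_i(0) = h_i(0) = 0$. Monotonicity follows because $h_i$ is monotonically increasing and the scalar clipping map $t \mapsto \max(\underline{u}_i,\min(\overline{u}_i,t))$ is nondecreasing, so their composition is nondecreasing; in the middle regime it inherits the strict monotonicity of $h_i$.

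The only subtle point — and the closest thing to an obstacle — is the bookkeeping on the ReLU activation pattern in the three regimes: one must verify that the two corrective ReLU terms cancel correctly to produce a pure clipping, rather than some spurious offset. Once this case analysis is carried out cleanly, no further estimates are needed, and Lemma~\ref{lemma:u_f+-} follows directly from Lemma~\ref{lemma:u_f+}.
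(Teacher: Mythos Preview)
Your proposal is correct and is precisely a careful spelling-out of what the paper means by ``proof by inspection'': the paper offers no argument beyond that phrase, and your three-case analysis showing that \eqref{eq:u_f+-} implements $\max(\underline{u}_i,\min(\overline{u}_i,h_i(\omega_i)))$ is exactly the inspection being alluded to. Your explicit note that $\underline{u}_i\le 0\le \overline{u}_i$ is needed for $u_i(0)=0$ is a worthwhile clarification that the paper leaves implicit.
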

The proof of this lemma is by inspection.

\section{Learning Control Policies Using RNNs}\label{section:RNN}
The structure of the controllers are decided by the constructions in \eqref{eq:f+}, \eqref{eq:f-} and \eqref{eq:u_f+-}. In this section we develop a RNN based RL algorithm to learn their weights and biases. 
\subsection{Discretize Time System}
To learn the controller and simulate the trajectories of the system, we discretize the dynamics \eqref{eq:Dynamic} with step size $\Delta t$. We use $k$ and $K$ to represent the discrete time and the total number of stages, respectively.  The states $(\theta_i,w_i)$ at bus $i$ evolves along the trajectory are represented as $\bm{\theta_i}=(\theta_i(0),\theta_i(1),\cdots,\theta_i(K) )$ and $\bm{\omega_i}= (\omega_i(0),\omega_i(1),\cdots,\omega_i(K))$ over K stages,  with the control sequence $\bm{u_{\varphi_i}}=(u_{\varphi_i}(\omega_i(0)),\cdots,u_{\varphi_i}(\omega_i(K)))$. The infinity norm of the sequence of $\omega_i(k)$ is then defined by $||\bm{\omega_i}||_{\infty}=\max_{k=0,\cdots,K} |w_i(k)|$. The cost on controller is the quadratic function of action 
defined by its two-norm $||\bm{u_{\varphi_i}}||_2^2=\frac{1}{K}\sum_{k=1}^{K} (u_{\varphi_i}(k))^2$ . The optimization problem is
\begin{subequations}\label{eq:Discrete_optimization}
\begin{align}
\min_{\bm{\varphi}}&\quad  \sum_{i=1}^{n} \color{black}\left(||\mathbf{\bm{\omega_i}}||_{\infty}+\gamma||\bm{u_{\varphi_i}}||_2^2\right)\label{subeq:Discrete_optimization_Obj}\\
 \mbox{s.t. } & \theta_i(k)= \theta_i(k-1)+ \omega_i(k-1)\Delta t \label{subeq:Discrete_optimization_delta}\\
\begin{split}
&\omega_i(k) =-\frac{\Delta t}{M_{i}}\sum_{j=1}^{|\mathcal{B}|}B_{i j}\sin(\theta_{ij}(k-1))+\frac{\Delta t}{M_{i}}p_{m,i}\\
&\quad+\left(1-\frac{D_i\Delta t}{M_{i}}\right)\omega_i(k-1)-\frac{\Delta t}{M_{i}}u_{\varphi_i}(\omega_i(k-1))
\end{split}\label{subeq:Discrete_optimization_w} \\
& \underline{u}_i \leq u_{\varphi_i}(\omega_i(k))\leq \overline{u}_i \label{subeq:Discrete_optimization_bound}\\
& \omega_i(k) u_{\varphi_i}(\omega_i(k))\geq 0 \label{subeq:Discrete_optimization_Lyapunov}\\
&  u_{\varphi_i} (\cdot) \mbox{ is increasing}
\label{subeq:Discrete_optimization_Mono}
\end{align}
\end{subequations}
and all equations hold for $i=1,\dots,n$. The constraints \eqref{subeq:Discrete_optimization_Lyapunov} and \eqref{subeq:Discrete_optimization_Mono} guarantee exponentially stability. 

Note that the optimization variable $\bm{\varphi}$ exists in all the time steps in \eqref{eq:Discrete_optimization}. A straightforward gradient-based training approach is challenging since we need to calculate the gradient all the way to the first time step for all time steps $k=0,\cdots,K$. 
To mitigate this challenge, we propose a RNN-based framework that integrates the state transition dynamics \eqref{subeq:Discrete_optimization_delta} and \eqref{subeq:Discrete_optimization_w}.  
This way, the gradient of the optimization objective with respect to $\bm{\varphi}$ can be computed efficiently through back-propagation.

\subsection{RNN for control}
RNN is a class of artificial neural networks where connections between nodes form a directed graph along a temporal sequence. This allows it to exhibit temporal dynamic behavior. 
 By defining the cell state as the time-coupled states $\theta_i$ and $\omega_i$, the state transition dynamics of the power system is integrated as  illustrated in Fig.~\ref{fig:RNN_Structure}\\
\begin{figure}[H]	
	\centering
	\includegraphics[scale=0.48]{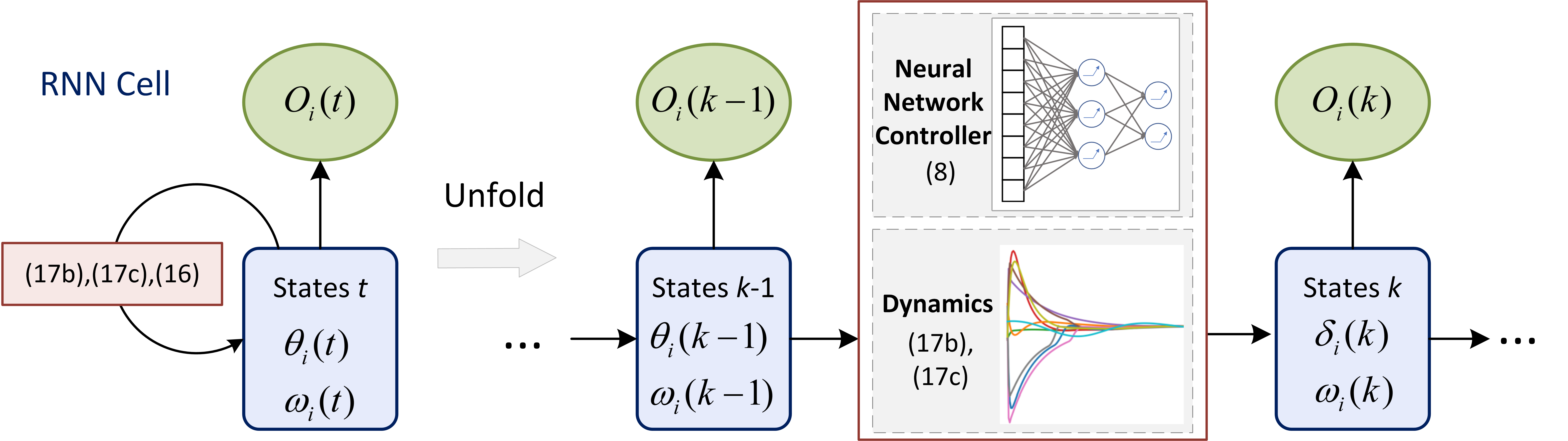}
	\caption{Structure of RNN for the frequency control problem}
	\label{fig:RNN_Structure}
\end{figure}

The operation of RNN is shown by the left side of Fig.~\ref{fig:RNN_Structure}. The cell unit of RNN will remember its current state at the stage $k$ and pass it as an input to the next stage. Unfolding the cell unit through time will give the right side of Fig.~\ref{fig:RNN_Structure}. In this way, RNN can be utilized to deal with time-coupled state variables. Specifically, the state $\left(\theta_i(k-1),\omega_i(k-1)\right)$ for all $i=1,\cdots,n$ at the stage $k-1$ is taken as an input in the state transition function \eqref{subeq:Discrete_optimization_delta} \eqref{subeq:Discrete_optimization_w} and thus the state $\left(\theta_i(k),\omega_i(k)\right)$ for all $i=1,\cdots,n$ at the stage $k$ is obtained. The control function $u_{\varphi_i}(\omega_i(k))$ in the state transition function is formatted through \eqref{eq:u_f+-} to satisfy inequality constraints. The output $O_i(k)=\left [O_i^1(k)\quad O_i^2(k)  \right ]$ at stage $k$ is a vector with two components computed by $O_i^1(k)=\omega_i(k)$ and $O_i^2(k)=\left (u_{\varphi_i}(\omega_i(k))  \right )^2$.
The loss function is formulated to be equivalent with the objective function \eqref{subeq:Discrete_optimization_Obj} as:
\begin{equation}\label{eq:RNN_Loss}
    Loss=\sum_{i=1}^{N} \max_{k=0,\cdots,K} |O_i^1(k)|+\gamma\frac{1}{K}\sum_{k=1}^{K}O_i^2(k)
\end{equation}
 The trainable variables $\bm{\varphi}$ is specified in the neural network controller \eqref{eq:u_f+-} and updated by gradient descent through the Loss function $\eqref{eq:RNN_Loss}$. The unfolded structure of RNN form a directed graph along a temporal sequence where the gradient of Loss function can be efficiently computed by auto-differentiation mechanisms~\cite{griewank1989automatic}.  

\subsection{Algorithm}
The pseudo-code for our proposed method is given in Algorithm 1. The variables to be trained are weights $\bm{\varphi}=\{\hat{\bm{q}},\hat{\bm{b}},\hat{\bm{z}},\hat{\bm{c}}\}$ for control network represented by \eqref{eq:f+}-\eqref{eq:u_f+-} . The $i-th$ row of $\hat{\bm{q}}$ and $\hat{\bm{z}}$ are the vector $\hat{\bm{q}}_i$ and $\hat{\bm{z}}_i$ in \eqref{eq:f+} and \eqref{eq:f-}, respectively. The $i$-th column of $\hat{\bm{b}}$ and $\hat{\bm{c}}$ are the vector $\hat{\bm{b}}_i$ and $\hat{\bm{c}}_i$ in \eqref{eq:f+} and \eqref{eq:f-}, respectively. Training is implemented in a batch updating style where the $h$-th batch initialized with randomly generated initial states $\{\theta_i^h(0),\omega_i^h(0)\}$ for all $i=1,\cdots,n$. The evolution of states in $K$ stages will be computed through the structure of RNN as shown by Fig.~\ref{fig:RNN_Structure}. Adam algorithm is adopted to update weights in each episode.
\\
\begin{algorithm}
 \caption{Reinforcement Learning with RNN}
 \begin{algorithmic}[1]
 \renewcommand{\algorithmicrequire}{\textbf{Require: }}
 \renewcommand{\algorithmicensure}{\textbf{Input:}}
 \REQUIRE  Learning rate $\alpha$, batch size $H$, total time stages K, number of episodes $I$, parameters in optimal frequency control problem \eqref{eq:Discrete_optimization}
 \ENSURE The bound of $\overline{\theta}_i$ and $\overline{\omega}_i$ to generate the initial states\\
\textit{Initialisation} :Initial weights $\bm{\varphi}$ for control network
  \FOR {$episode = 1$ to $I$}
  \STATE  Generate initial states $\theta_i^h(0),\omega_i^h(0)$ for the $i$-th bus in the $h$-th batch, $i=1,\cdots,n$, $h=1,\cdots,H$\\
  \STATE  Reset the state of cells in each batch as the initial value $x_i^h\leftarrow \{\theta_i^h(0),\omega_i^h(0)\}$.\\
  \STATE  RNN cells compute through K stages to obtain the output $\{O_{h,i}(0),O_{h,i}(1),\cdots,O_{h,i}(K)\}$ \\
  \STATE  Calculate total loss of all the batches $Loss=\frac{1}{H}\sum_{h=1}^{H} \sum_{i=1}^{N} \max_{k=0,\cdots,K} |O_{h,i}^1(k)|+\gamma\frac{1}{K}\sum_{k=1}^{K}O_{h,i}^2(k)$\\
  \STATE  Update weights in the neural network by passing $Loss$ to Adam optimizer:
  $\bm{\varphi} \leftarrow \bm{\varphi}-\alpha \text{Adam}(Loss)$
  \ENDFOR
 \end{algorithmic} 
 \end{algorithm}

\section{Simulation Studies} \label{sec:simulation}
Case studies are conducted on the IEEE New England 10-machine 39-bus (NE39)
power networks to illustrate the effectiveness of the proposed method. Firstly, we show that the proposed Lyapunov-based approaches for designing neural network controller can guarantee stability, while unconstrained neural networks may result in unstable controllers. Then, we show that the proposed structure can learn a nonlinear controller that performs better than other controllers. To ensure that our results apply in practice, simulations are conducted on the system with  $6^{nd}$-order generator model as well as PLL loops on the inverters for frequency measurement~\cite{chow1992toolbox, ortega2015generalized}.

\subsection{Simulation Setting}
We use TensorFlow 2.0 framework to build the reinforcement learning environment and run the training process in Google Colab with a single Nvidia Tesla P100 GPU with 16GB memory. Power System Toolbox (PST) in MATLAB is utilized to simulate the dynamic response from 6-order generator model with turbine-governing system and 2-order phase-locked-loop (PLL) block on the inverter-connected resources
~\cite{chow1992toolbox, ortega2015generalized}. Parameters for the  transient and sub-transient process of generators are obtained in~\cite{demetriou2015dynamic}. 
The system is in the Kron reduced form~\cite{nishikawa2015comparative,Tinu20} and its dynamics is represented by \eqref{eq:Dynamic}. The bound on action $\overline{u}_i$ is generated to be uniformly distributed in $[0.8P_{i},P_{i}]$. 
The initial states of angle and frequency are randomly generated such that $\delta_i(0)$ is uniformly distributed in $[-0.05,0.05]\,\text{rad}$, $\omega_i(0)$ is uniformly distributed in $[-0.1,0.1]\,\text{Hz}$. The cost coefficient $\gamma=0.01$. The stepsize between time states is set as $\Delta t= 0.01s$ and the total time stages is $K=200$.


We compare the performance of the proposed RNN based structure where the neural network controller is designed with and without the Lyapunov-based approach, and the drop control with optimized linear coefficient. The parameter settings are as follows:
\begin{enumerate}
  \item RNN-Lyapunov: Neural network controller designed based on Algorithm 1, which satisfies Theorem
  ~\ref{theorem:Exponential_stable}. The episode number, batch size and the number of neurons are 600, 800 and 20, respectively.
  Parameters of RNN are updated using Adam with learning rate initializes at 0.05 and decays every 30 steps with a base of 0.7.
  \item RNN-Wo-Lyapunov: Controllers are learned without imposing any structures and purely optimizes the reward during training. The controllers are parametrized as neural networks with two dense-layer and the activation function in the first layer is $\tanh$. All the other parameters are the same as RNN-Lyapunov.
  \item Linear droop control: let $k_i$ be the droop coefficient for bus $i$ and the droop control policy is $u_i(\omega_i)=k_i \omega_i$ for $i=1,\dots,n$, thresholded to their upper and lower bounds.  The optimized droop coefficient is obtained by solving~\eqref{eq:Discrete_optimization} using fmincon function
of Matlab.
  \item PG-Monotone: This controller is to demonstrate the performance improvements of using RNN during training. So here we impose the stacked-ReLU structure and trained with REINFORCE Policy Gradient algorithm~\cite{sutton2018reinforcement}. The neural networks for controller, the episode number, batch number and optimizer are the same as RNN-Lyapunov. The learning rate initializes at 0.01 and decays every 30 steps with a base of 0.7.
\end{enumerate}

\subsection{Necessity of Lyapunov-based Approach} 
\begin{figure}[ht]	
	\centering
	\includegraphics[scale=0.45]{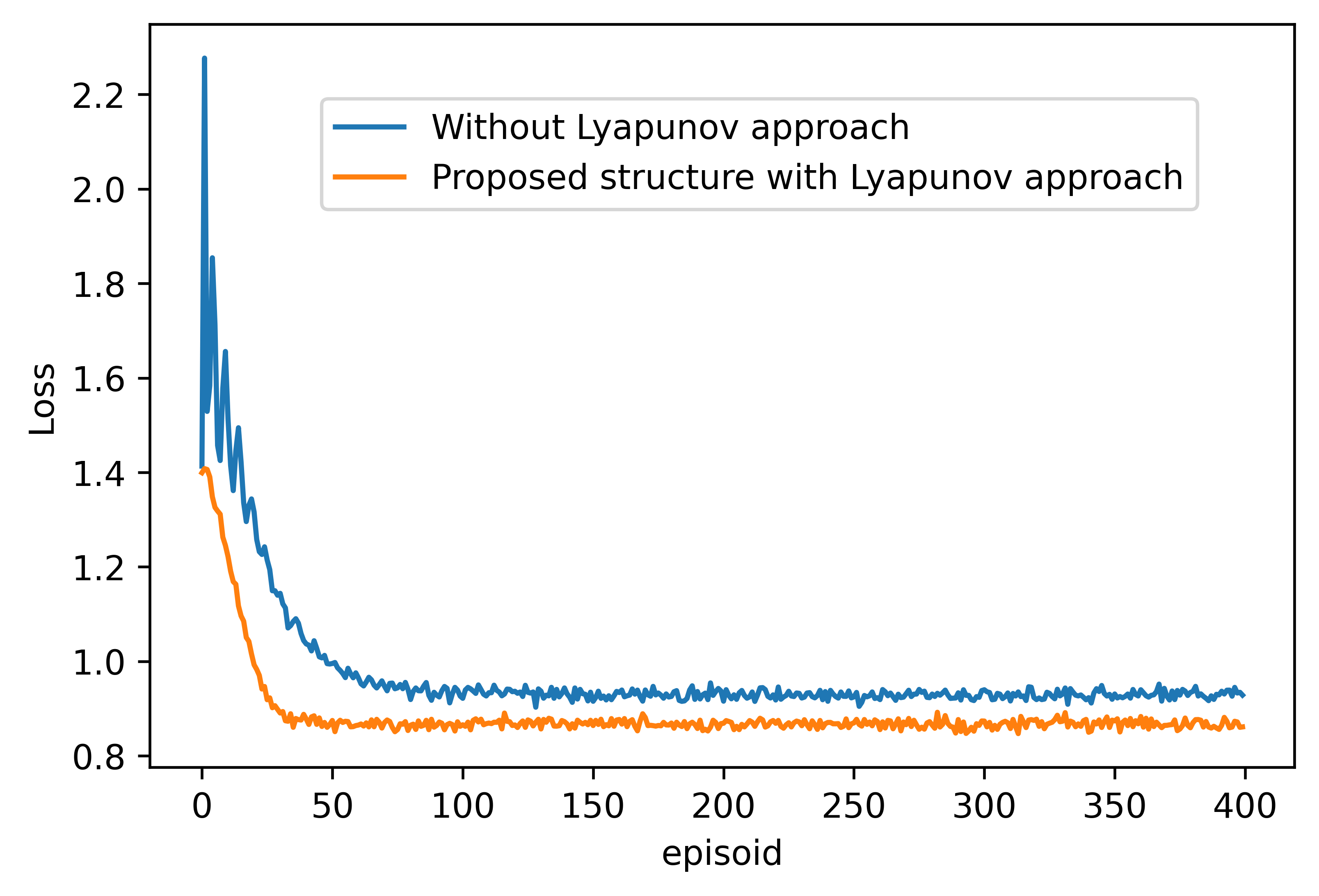}
	\caption{\highlights{Average batch loss along episodes for neural network controller designed with and without the Lyapunov-based approach. Both converges, with the former converging much for quickly than the latter.} }
	\label{fig:Loss_epi}
\end{figure}

\begin{figure}[ht]
\centering
\subfloat[Dynamics of $\omega$ (left) and $\delta$ (right) for RNN-Lyapunov]{\includegraphics[width=3.3in]{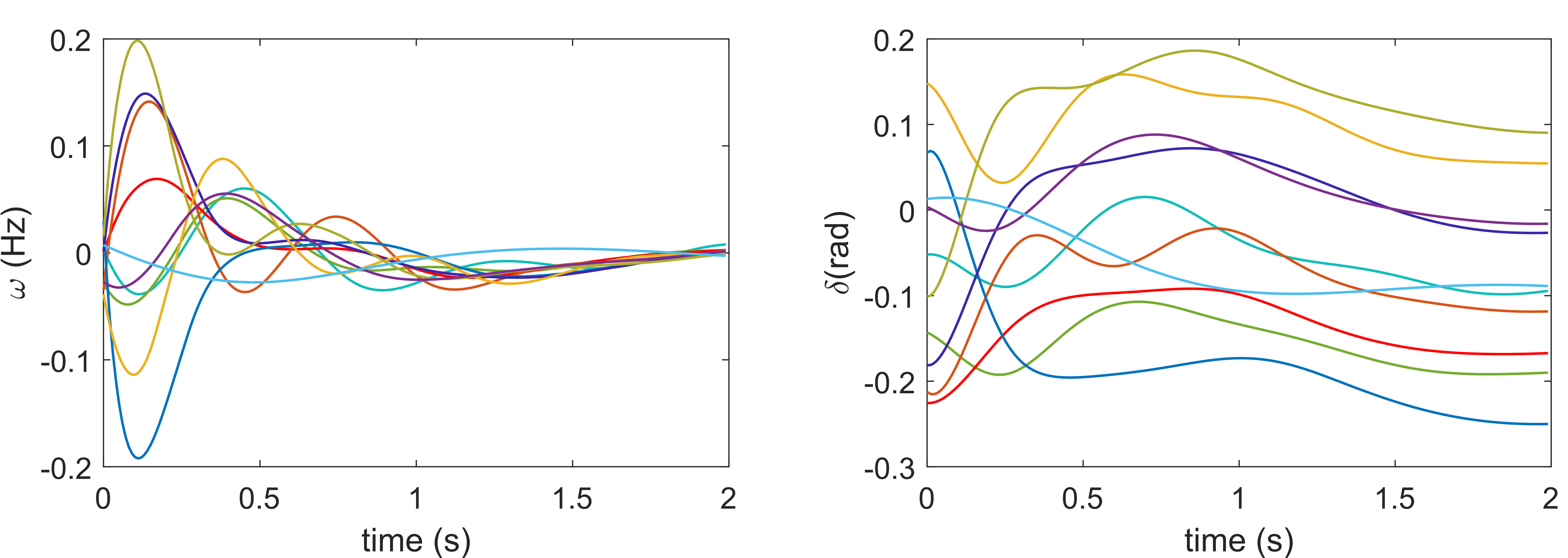}%
\label{fig_first_case}}
\hfil
\subfloat[Dynamics of $\omega$ (left) and $\delta$ (right) for RNN-Wo-Lyapunov]{\includegraphics[width=3.3in]{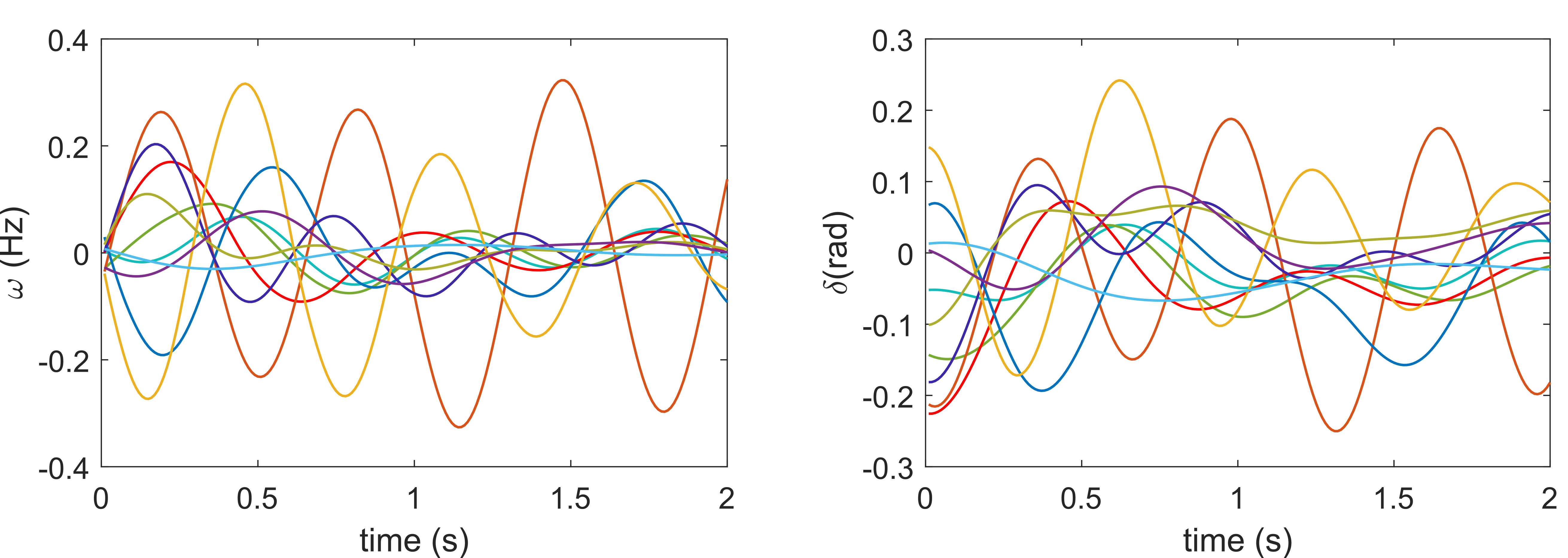}%
\label{fig_second_case}}
\caption{Dynamics of angle $\delta$ and frequency deviation $\omega$ in 10 generator buses corresponding to (a) the neural network controller designed with the Lyapunov-based approach and (b) the neural network controller designed without the Lyapunov-based approach. The two controllers exhibit qualitatively different behavior even though they both achieve finite training losses in Fig.~\ref{fig:Loss_epi}. The controller designed without the Lyapunov-based approach leads to unstable trajectories of the system.}
\label{fig:Dynamic_Behavior_bad}
\end{figure}

Theorem
~\ref{theorem:Exponential_stable} ensures that the learned controller would be locally exponentially stable, but it's interesting to check the performance of an unconstrained controller. Intuitively, an unstable controller should lead to large costs since some trajectories would be blowing up. Then maybe a controller that minimizes the cost would also be stabilizing. 

Figure~\ref{fig:Loss_epi} shows the training loss between controllers
learned with and without the Lyapunov-based approach. Both
losses converge, with the Lyapunov-based controller having
better performances. However, when we implement the controllers,
the one without considering stability is unstable
and leads to very large state oscillations (Fig.~\ref{fig_second_case}). In contrast,
the controller constrained by the Lyapunov condition shows
good performance (Fig.~\ref{fig_first_case}). The reason for this dichotomy
in performance is that we can only check a finite number
of trajectories during training, and good training performance
does not in itself guarantee good generalization. Therefore,
explicitly constraining the controller structure is necessary.


\subsection{Performance Comparisons}
This subsection shows that the proposed method can learn a static nonlinear controller that outperforms the optimal linear droop controller and the RNN training technique is much more efficient than using a standard policy gradient method. Figure~\ref{fig:Action_r5_5} illustrates the control policy learnt from RNN-Lyapunov, Policy Gradient and the linear droop control with optimized droop coefficient for four generators. Compared with the traditional droop control, the proposed stacked-ReLU neural network learns a nonlinear controller with different shapes for RNN-Lyapunov and PG-Monotone.

We first study the learned controllers and their performances during a sudden change in load/generation. Suppose bus $4$ experiences a step load increase of 0.05 p.u. occurs at t=0.3s and a step load recovery occurs at t=5.3s. Figure~\ref{fig:Compare_RNN_Linear_r5_5} illustrates the dynamics of $\omega$ and corresponding control
action $u$ under each of the controllers. 
After the step load change, RNN-Lyapunov and linear droop control  achieve similar maximum frequency deviation, while the control action of RNN-Lyapunov is much lower than the other. PG-Monotone shows higher frequency deviation and oscillations. Therefore, the proposed RNN-Lyapunov approach has the minimal cost.
The computational time of the proposed RNN based method is 1080.38s, while the computational time of REINFORCE policy gradient takes 4206.43s. Therefore, the proposed RNN based structure reduces computational time by approximate 74.32\% compared with the general RL structure.

\begin{figure}[t]	
  	\centering
  	\includegraphics[width=3.4in]{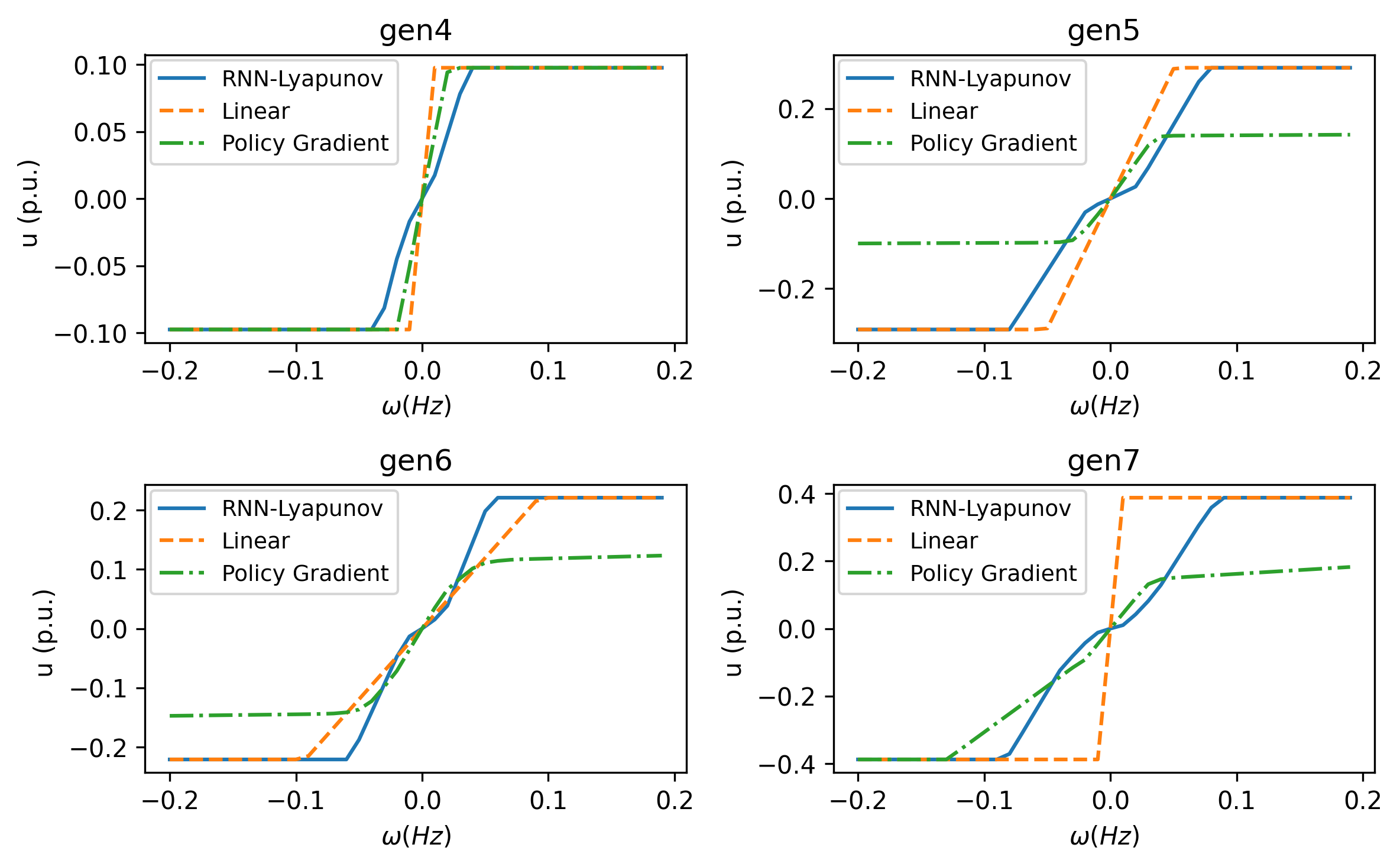}
  	\caption{Examples of learned controller $u$ corresponding to RNN-Lyapunov, Linear droop control and Policy Gradient for generator buses 4,5,6 and 7. The comparison shows that the proposed Stacked-ReLU neural network learns  nonlinear controllers in flexible shapes. }
  	\label{fig:Action_r5_5}
  \end{figure}

\begin{figure}[ht]
\centering
\subfloat[Dynamics of $\omega$ (left) and $u$ (right) for RNN-Lyapunov ]{\includegraphics[width=3.4in]{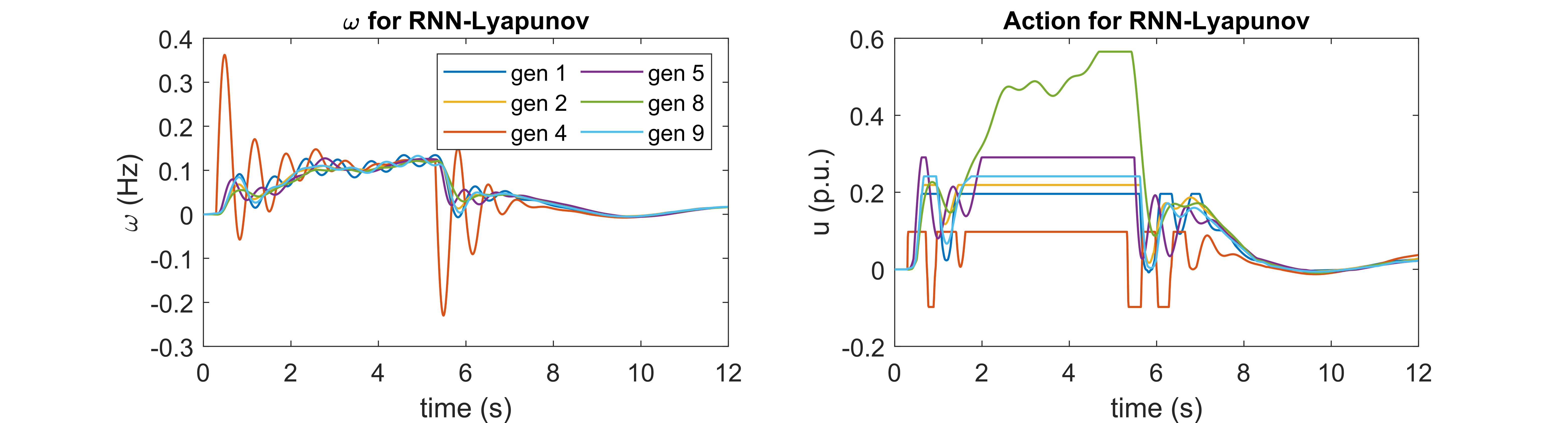}%
}
\hfil
\subfloat[Dynamics of $\omega$ (left) and $u$ (right) for linear droop control]{\includegraphics[width=3.4in]{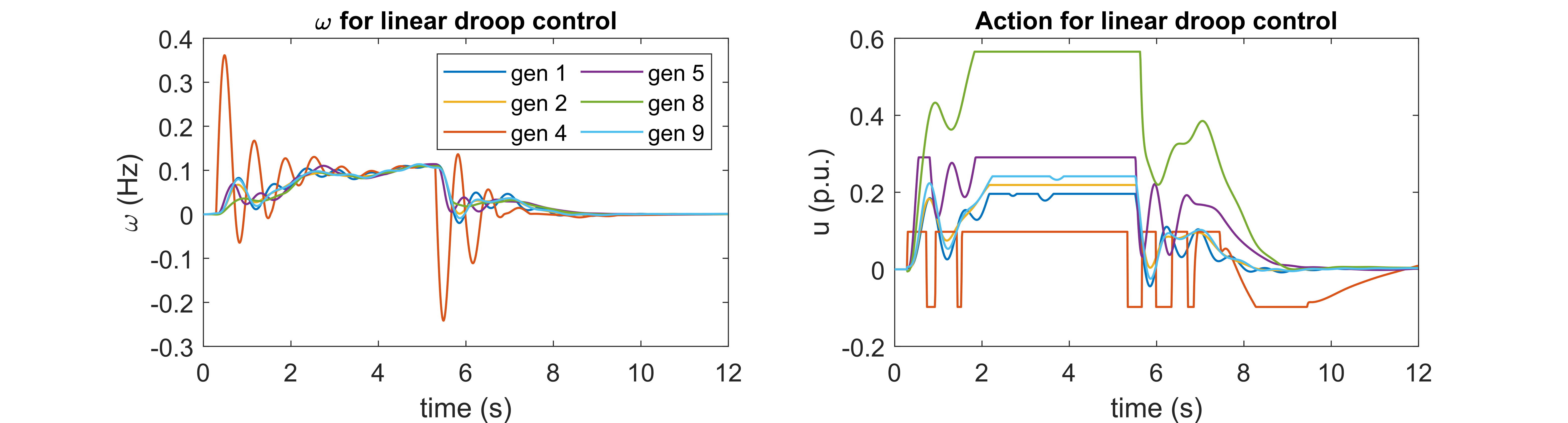}%
}
\hfil
\subfloat[Dynamics of $\omega$ (left) and $u$ (right) for controller obtained by PG-Monotone]{\includegraphics[width=3.4in]{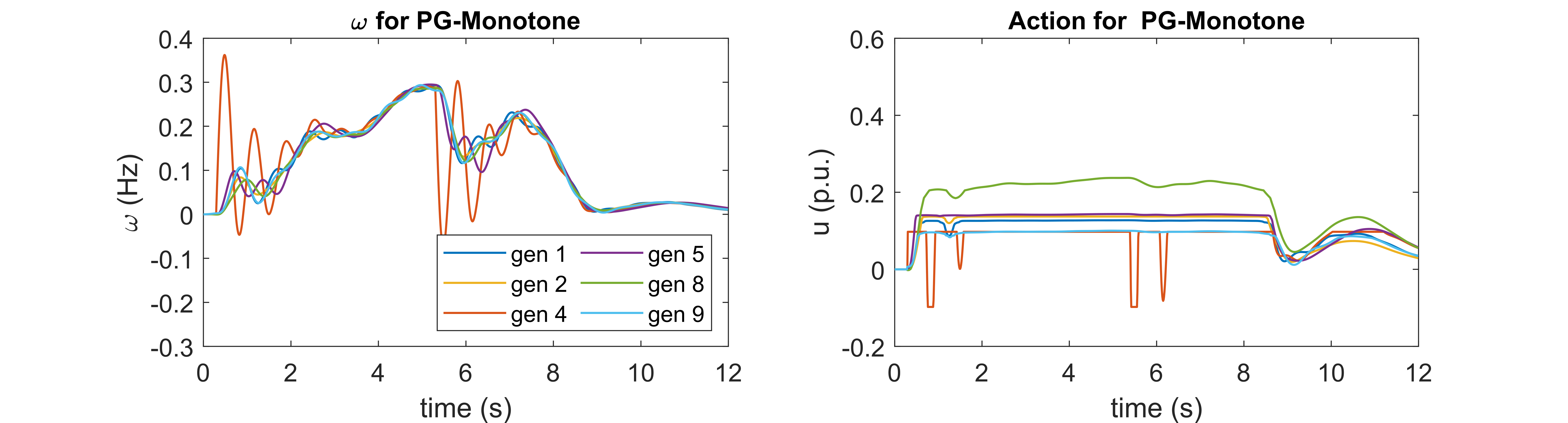}%
}
\caption{Dynamics of  the frequency deviation $w$ and the control action $u$ in selected generator buses corresponding to (a) Lyapunov-guided neural network controller learned with RNN. (b) Linear droop control. (c) Lyapunov-guided neural network controller learned from Policy Gradient with Monotone structure design. The proposed RNN controller has the smallest cost.}
\label{fig:Compare_RNN_Linear_r5_5}
\end{figure}

Next, we randomize the initial starting points to simulate and test the performance of the three methods under multiple different trajectories. We fix initial $\delta$ in $U[-0.1,0.1]\,\text{rad}$ and let the initial $\omega$ to uniformly distributed in $U[-\bar{\omega},\bar{\omega}]$ around the equilibrium, where $\bar{\omega}$ denotes the variation bound of initial $\omega$. 
The average loss corresponding to $\bar{\omega} =0.0,0.025,\cdots,0.15\,\text{Hz}$ are illustrated in Fig.~\ref{fig:Loss_r5_5}. $\bar{\omega}=0$ is the case that no variation of $\omega$ exists in the initial condition. 
Overall, RNN-Lyapunov remains approximate 11.39\% and 5.41\% lower in average loss than that of linear droop control, PG-Monotone, respectively. Therefore, the proposed method learn the nonlinear controller that leads to better average control performance under different initial conditions.

\begin{figure}[ht]	
	\centering
	\includegraphics[scale=0.45]{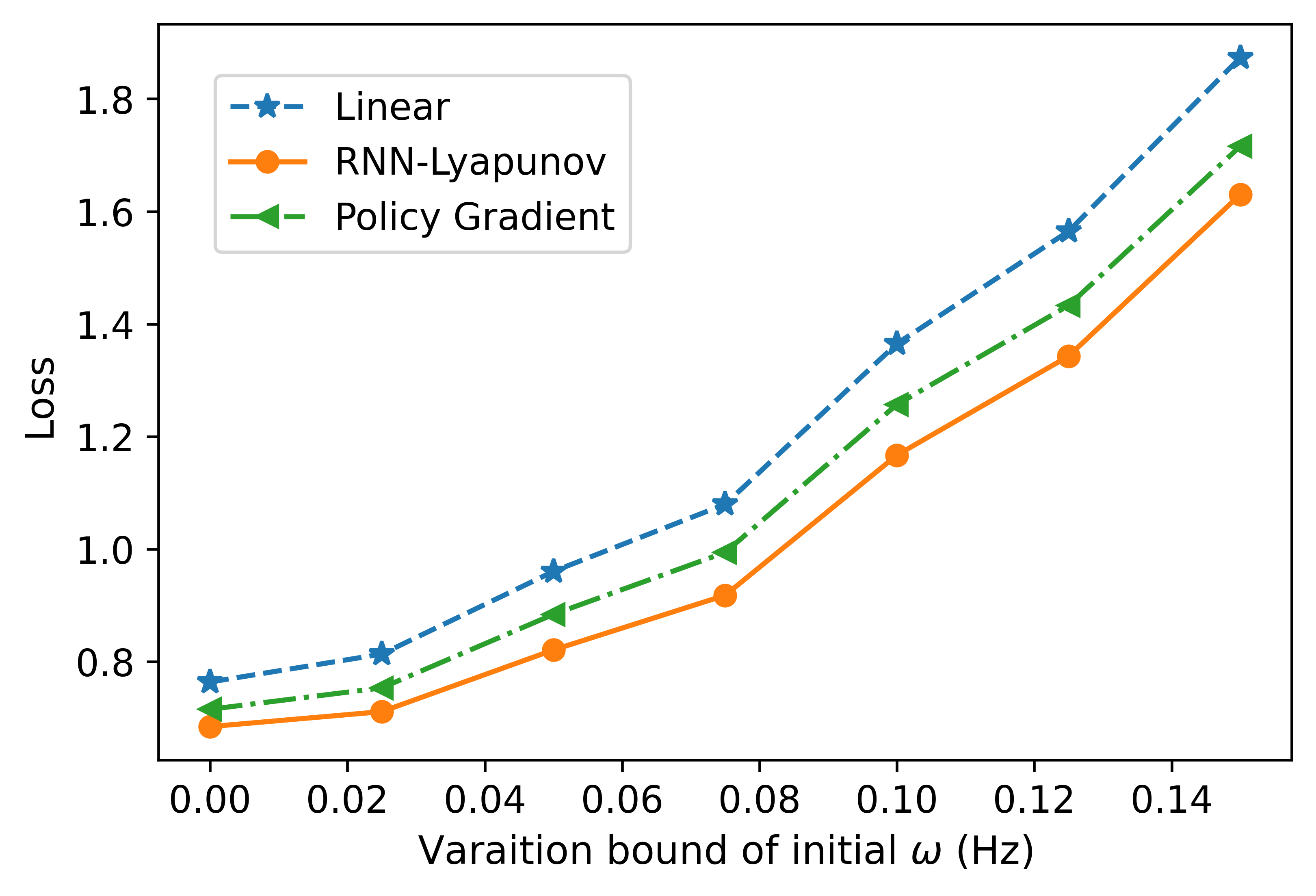}
	\caption{Loss with different variation range of initial conditions for RNN-Lyapunov, Linear droop controller and Policy Gradient. Compared with Linear droop controller and PG-Monotone, RNN-Lyapunov reduces the loss by approximate 11.39\%, 5.41\%, respectively. }
	\label{fig:Loss_r5_5}
\end{figure}

\section{Conclusion} \label{sec:conclusion}
This paper investigates the optimal frequency control problem using reinforcement learning with stability guarantees. 
From Lyapunov stability theory, We construct the controllers to be monotonically increasing through the origin, and prove they guarantee stability for all operating points in a region. These controllers are trained using a RNN-based method that allows for efficient back propagation through time. The learned controllers are static piece-wise linear functions that do not need real-time computation and is practical for implementation. Through simulations, we show that they outperform optimal linear droop as well as purely unstructured controllers trained via reinforcement learning. In particular, controllers failing to consider stability constraints in learning may lead to unstable trajectories of the state variables, while our proposed controllers can achieve optimal performances in system frequency responses that use small control efforts. 

\bibliography{Reference.bib}

\appendices

\section{Proof of Lemma~\ref{lemma:Pos_Lyapunov}}\label{appendices:lemma_Pos_Lyapunov}
\begin{proof}
The proof is similar to the one of~\cite[Lemma~14]{weitenberg2018robust}, which bounds $V(\boldsymbol{\delta}, \boldsymbol{\omega})$ term by term. Firstly, using the Rayleigh-Ritz theorem~\cite{Horn2012MA}, the kinetic energy term, $\frac{1}{2}\sum_{i=1}^{n}M_i(\omega_{i}-\omega^\ast)^{2}$, is lower bounded by $\frac{1}{2}\lambda_\mathrm{min}(\boldsymbol{M})||\boldsymbol{\omega}\!-\!\boldsymbol{\omega}^\ast||^2_2$ and upper bounded by $\frac{1}{2}\lambda_\mathrm{max}(\boldsymbol{M})||\boldsymbol{\omega}\!-\!\boldsymbol{\omega}^\ast||^2_2$.
Then, with a direct application of~\cite[Lemma~4]{weitenberg2018exponential}, the potential energy term $W_\mathrm{p}(\boldsymbol{\delta})$ in \eqref{eq:Wp} can be bounded by $\beta_{1}\left\|\boldsymbol{\delta}-\boldsymbol{\delta}^{\ast}\right\|_2^2 \leq W_\mathrm{p}(\boldsymbol{\delta}) \leq \beta_{2}\left\|\boldsymbol{\delta}-\boldsymbol{\delta}^{\ast}\right\|_2^2 $
for some constants $\beta_{1}>0$ and $\beta_{2}>0$. 

To deal with the cross term $W_\mathrm{c}(\boldsymbol{\delta})$, we define $p_{\mathrm{e},i}(\boldsymbol{\delta}):=\sum_{j=1}^{n} B_{ij} \sin(\delta_{ij})$. Then, $W_\mathrm{c}(\boldsymbol{\delta}) =\left(\boldsymbol{p}_\mathrm{e}(\boldsymbol{\delta})\!-\!\boldsymbol{p}_\mathrm{e}(\boldsymbol{\delta}^\ast)\right)^T\!\boldsymbol{M}(\boldsymbol{\omega}\!-\!\boldsymbol{\omega}^\ast)$. Clearly, $-|W_\mathrm{c}(\boldsymbol{\delta})|\leq W_\mathrm{c}(\boldsymbol{\delta})\leq|W_\mathrm{c}(\boldsymbol{\delta})|$. For $\forall \boldsymbol{x},\boldsymbol{y}\in\real^{n}$, $2|\boldsymbol{x}^T\boldsymbol{y}|\leq||\boldsymbol{x}||^2_2+||\boldsymbol{y}||^2_2$. Thus, we have
\begin{align*}
|W_\mathrm{c}(\boldsymbol{\delta})|\leq&\ \dfrac{1}{2}\left(||\boldsymbol{p}_\mathrm{e}(\boldsymbol{\delta})\!-\!\boldsymbol{p}_\mathrm{e}(\boldsymbol{\delta}^\ast)||^2_2+||\boldsymbol{M}(\boldsymbol{\omega}\!-\!\boldsymbol{\omega}^\ast)||^2_2\right)\nonumber \\
\leq&\ \dfrac{1}{2}\left(\gamma_2||\boldsymbol{\delta}\!-\!\boldsymbol{\delta}^\ast||^2_2+\lambda_\mathrm{max}(\boldsymbol{M})^2||\boldsymbol{\omega}\!-\!\boldsymbol{\omega}^\ast||^2_2\right)\,,
\end{align*}
where  the second inequality comes from~\cite[Lemma~4]{weitenberg2018exponential} and the Rayleigh-Ritz theorem, with some $\gamma_2>0$. 
Hence, the $W_\mathrm{c}(\boldsymbol{\delta})$ term is lower bounded by $-\dfrac{1}{2}\left(\gamma_2||\boldsymbol{\delta}\!-\!\boldsymbol{\delta}^\ast||^2_2+\lambda_\mathrm{max}(\boldsymbol{M})^2||\boldsymbol{\omega}\!-\!\boldsymbol{\omega}^\ast||^2_2\right)$ and upper bounded by $\dfrac{1}{2}\left(\gamma_2||\boldsymbol{\delta}\!-\!\boldsymbol{\delta}^\ast||^2_2+\lambda_\mathrm{max}(\boldsymbol{M})^2||\boldsymbol{\omega}\!-\!\boldsymbol{\omega}^\ast||^2_2\right)$.
Finally, combining the inequalities, we can bound the entire Lyapunov function $V(\boldsymbol{\delta}, \boldsymbol{\omega})$ in \eqref{eq: Lyapunov} with
\begin{align*}
\alpha_{1} &:=\dfrac{1}{2}\min \left(\lambda_\mathrm{min}(\boldsymbol{M})-\epsilon \lambda_\mathrm{max}(\boldsymbol{M})^2, 2\beta_{1}-\epsilon \gamma_{2}\right)>0\,, \\
\alpha_{2} &:=\dfrac{1}{2}\max \left(\lambda_\mathrm{max}(\boldsymbol{M})+\epsilon \lambda_\mathrm{max}(\boldsymbol{M})^2, 2\beta_2+\epsilon \gamma_{2}\right)>0\,,
\end{align*}
for sufficiently small $\epsilon>0$.
\end{proof}

\section{Proof of Lemma~\ref{lemma: dot_V}}\label{appendices:lemma_dot_V}
\begin{proof}
We start by computing the partial derivatives of $V(\boldsymbol{\delta}, \boldsymbol{\omega})$ with respect to each state, i.e.,
\begin{subequations}\label{eq: Lyapunov_derivative_partial
}
\begin{align}
\frac{\partial V }{\partial \delta_i}&=p_{\mathrm{e},i}(\boldsymbol{\delta})-p_{\mathrm{e},i}(\boldsymbol{\delta}^\ast) +\epsilon\!\!\sum_{j=1,j\neq i}^nB_{ij}\cos(\delta_{ij})M_i(\omega_i-\omega^\ast)\nonumber\\
&\quad-\epsilon\!\!\sum_{j=1,j\neq i}^n B_{ij}\cos(\delta_{ij})M_j(\omega_j-\omega^\ast)\,,\nonumber\\
\frac{\partial V }{\partial \omega_i}&=M_i\left[\omega_i-\omega^\ast+\epsilon\left(p_{\mathrm{e},i}(\boldsymbol{\delta})-p_{\mathrm{e},i}(\boldsymbol{\delta}^\ast)\right)\right]\,.\nonumber
\end{align}
\end{subequations}


Therefore, the time derivative of $V(\boldsymbol{\delta}, \boldsymbol{\omega})$, i.e.,  $\dot{V}(\boldsymbol{\delta}, \boldsymbol{\omega})$, is 
\begin{equation*} 
 \begin{aligned} 
& \sum_{i=1}^{n}\left(\frac{\partial V }{\partial \delta_i} \dot{\delta}_i+\frac{\partial V }{\partial \omega_i}\dot{\omega}_i\right)
\\
=&\ \!\!\left(\boldsymbol{p}_\mathrm{e}(\boldsymbol{\delta})\!-\!\boldsymbol{p}_\mathrm{e}(\boldsymbol{\delta}^\ast)+\epsilon\boldsymbol{H}(\boldsymbol{\delta})\boldsymbol{M}(\boldsymbol{\omega}\!-\!\boldsymbol{\omega}^\ast)\right)^T\left(\boldsymbol{\omega}-\bm{1}\frac{\bm{1}^T\boldsymbol{\omega}}{n}\right)\\
&\ \!\!+\!\left[\boldsymbol{\omega}\!-\!\boldsymbol{\omega}^\ast \!\!+\! \epsilon\left(\boldsymbol{p}_\mathrm{e}(\boldsymbol{\delta})\!-\!\boldsymbol{p}_\mathrm{e}(\boldsymbol{\delta}^\ast)\right)\right]^T\!\!\left(\boldsymbol{p}_\mathrm{m}\!-\!\boldsymbol{D}\boldsymbol{\omega}\!-\!\boldsymbol{u}(\boldsymbol{\omega})\!-\!\boldsymbol{p}_\mathrm{e}(\boldsymbol{\delta})\right)\\
&+\underbrace{\left(\boldsymbol{p}_\mathrm{e}(\boldsymbol{\delta})\!-\!\boldsymbol{p}_\mathrm{e}(\boldsymbol{\delta}^\ast)\!+\!\epsilon\boldsymbol{H}(\boldsymbol{\delta})\boldsymbol{M}(\boldsymbol{\omega}\!-\!\boldsymbol{\omega}^\ast)\right)^T\left(\bm{1}\frac{\bm{1}^T\boldsymbol{\omega}}{n}\!-\!\bm{1}\omega^\ast\right)}_{=0}\\
&\ \!\!-\!\left[\boldsymbol{\omega}\!-\!\boldsymbol{\omega}^\ast \!\!+\! \epsilon\!\left(\boldsymbol{p}_\mathrm{e}(\boldsymbol{\delta})\!-\!\boldsymbol{p}_\mathrm{e}(\boldsymbol{\delta}^\ast)\right)\right]^T\!\!\underbrace{\left(\!\boldsymbol{p}_\mathrm{m}\!\!-\!\!\boldsymbol{D}\boldsymbol{\omega}^\ast\!\!\!-\!\boldsymbol{u}(\boldsymbol{\omega}^\ast)\!-\!\boldsymbol{p}_\mathrm{e}(\boldsymbol{\delta}^\ast)\!\right)}_{=\boldsymbol{0}}
\\
=&\ \epsilon\left(\boldsymbol{H}(\boldsymbol{\delta})\boldsymbol{M}(\boldsymbol{\omega}\!-\!\boldsymbol{\omega}^\ast)\right)^T(\boldsymbol{\omega}\!-\!\boldsymbol{\omega}^\ast)-(\boldsymbol{\omega}\!-\!\boldsymbol{\omega}^\ast)^T\boldsymbol{D}(\boldsymbol{\omega}\!-\!\boldsymbol{\omega}^\ast)\\
&-\epsilon\left(\boldsymbol{p}_\mathrm{e}(\boldsymbol{\delta})\!-\!\boldsymbol{p}_\mathrm{e}(\boldsymbol{\delta}^\ast)\right)^T\left(\boldsymbol{p}_\mathrm{e}(\boldsymbol{\delta})\!-\!\boldsymbol{p}_\mathrm{e}(\boldsymbol{\delta}^\ast)\right)\\&-\epsilon\left(\boldsymbol{p}_\mathrm{e}(\boldsymbol{\delta})\!-\!\boldsymbol{p}_\mathrm{e}(\boldsymbol{\delta}^\ast)\right)^T\boldsymbol{D}(\boldsymbol{\omega}\!-\!\boldsymbol{\omega}^\ast)\\
&-\left[\boldsymbol{\omega}\!-\!\boldsymbol{\omega}^\ast \!\!+\! \epsilon\left(\boldsymbol{p}_\mathrm{e}(\boldsymbol{\delta})\!-\!\boldsymbol{p}_\mathrm{e}(\boldsymbol{\delta}^\ast)\right)\right]^T\left(\boldsymbol{u}(\boldsymbol{\omega})-\boldsymbol{u}(\boldsymbol{\omega}^\ast)\right)\,,
\end{aligned}
\end{equation*}
which is exactly \eqref{eq:Vdot}. Note that the extra terms in the second equality are added to construct a quadratic format without affecting the the original value of  $\dot{V}(\boldsymbol{\delta}, \boldsymbol{\omega})$ since $\boldsymbol{p}_\mathrm{e}(\boldsymbol{\delta})^T\bm{1}=\bm{0}$, $\boldsymbol{H}(\boldsymbol{\delta})^T\bm{1}=\bm{0}$, and $\boldsymbol{p}_\mathrm{m}\!-\!\boldsymbol{D}\boldsymbol{\omega}^\ast\!-\!\boldsymbol{u}(\boldsymbol{\omega}^\ast)\!-\!\boldsymbol{p}_\mathrm{e}(\boldsymbol{\delta}^\ast)=0$ by the condition at the equilibrium given in \eqref{eq:equli-sync-delta}. 

It remains to show that $\boldsymbol{Q}(\boldsymbol{\delta})\succ0$, which follows directly from the fact that the Schur complement of the block $\epsilon \boldsymbol{I}$ in $\boldsymbol{Q}(\boldsymbol{\delta})$ is positive definite: $\boldsymbol{D}-\dfrac{\epsilon}{2}(\boldsymbol{H}(\boldsymbol{\delta})\boldsymbol{M}+\boldsymbol{M}\boldsymbol{H}(\boldsymbol{\delta}))-\frac{\epsilon}{4} \boldsymbol{D}^2\succ 0$ for sufficiently small $\epsilon$.
\end{proof}

\section{Proof of Theorem~\ref{theorem:Universial Approximation}}\label{appendices:stacked-ReLU}
Let $\alpha$ bound the magnitude of first derivative of $r$ on $\mathbb{X}$ . Define an equispaced grid of points on $\mathbb{X}$, where $\beta=\frac{1}{n}$ is the spacing between grid points along each dimension. Corresponding to each grid interval $[k\beta,(k+1)\beta]$, assign a linear function $y(x)=r(k\beta)+\frac{r((k+1)\beta)-r(k\beta)}{\beta}(x-k\beta)$, where $y(k\beta)=r(k\beta)$ and $y((k+1)\beta)=r((k+1)\beta)$.  For all $x\in [k\beta,(k+1)\beta]$, from monotonic property, we have $r(k\beta)\leq r(x) \leq r((k+1)\beta)$ and $r(k\beta)\leq y(x) \leq r((k+1)\beta)$. Therefore, we can bound the approximation error by
\begin{equation}
    |y(x)-r(x)|\leq |r((k+1)\beta)-r(k\beta)|
\end{equation}
By mean value theorem, we know that 
\begin{equation}
r((k+1)\beta)-r(k\beta)=\beta  \frac{\partial r(c)}{\partial x}    
\end{equation}
for some point $c$ on the line segment between $k\beta$ and $(k+1)\beta$. Given the assumptions made at the outset, $|\frac{\partial r(c)}{\partial x}|$ is bounded by $\alpha$ and therefore  $|y(x)-r(x)|$ can be bounded
by $\beta \alpha .$ 

Further, we show that any piece-wise linear function of $y(x)=r(k\beta)+\frac{r((k+1)\beta)-r(k\beta)}{\beta}(x-k\beta)$ can be represented by the proposed construction \eqref{eq:f+}\eqref{eq:f-}. Without loss of generosity, assume that $y(x)$ is the positive part and approximated by $f^+(x)$. Let $b_i^{1}=0$, $q^{1}=r(\beta)$ and subsequently $b_i^{k}=(k-1)\beta$, $\sum_{j=1}^{k} q^{j}=\frac{r(k\beta)-r((k-1)\beta)}{\beta}$ for $k=2,3,\cdots,n.$ Then the construction of $f^+(x)$ through $\eqref{eq:f+}$ is exactly the same as $y(x)$. Therefore,  $|f(x)-r(x)|$ can also be bounded
by $\beta \alpha .$ We take $\beta<\frac{\epsilon}{\alpha}$ to complete the proof.

\ifCLASSOPTIONcaptionsoff
  \newpage
\fi

\end{document}